\date{}
\newtheorem{theorem}{Theorem}
\newtheorem{lemma}[theorem]{Lemma}
\newtheorem{definition}[theorem]{Definition}
\newtheorem{example}[theorem]{Example}
\definecolor{darkblue}{rgb}{0, .07, .5}
\definecolor{darkred}{rgb}{0.5,0,0}
 \definecolor{mahogany}{rgb}{0.65, 0., 0.5}
\newcommand{\fR}{\mathfrak{R}}
\newcommand{\Var}{\text{\rm{Var}}}
\newcommand{\PR}{\text{\rm{PR}}}
\newcommand{\E}{\mathbb{E}}
\begin{document}
\title{On the Duality of Additivity and Tensorization}

\author{ Salman Beigi$^1$, Amin Gohari$^{1,2}$\\
$^1${\it \small School of Mathematics,} {\it \small Institute for Research in Fundamental Sciences (IPM), Tehran, Iran}\\
$^2${\it \small Department of Electrical Engineering,} {\it \small Sharif University of Technology, Tehran, Iran}}

\maketitle

\begin{abstract}
A function is said to be additive if, similar to mutual information, expands by a factor of $n$, when evaluated on $n$ i.i.d.\ repetitions of a source or channel. On the other hand, a function is said to satisfy the tensorization property if it remains unchanged when evaluated on i.i.d.\ repetitions. Additive rate regions are of fundamental importance in network information theory, serving as capacity regions or upper bounds thereof. 
Tensorizing measures of correlation have also found applications in distributed source and channel coding problems as well as the distribution simulation problem. Prior to our work only two measures of correlation, namely the hypercontractivity ribbon and maximal correlation (and their derivatives), were known to have the tensorization property. 
In this paper, we provide a general framework to obtain a region with the tensorization property from any additive rate region.
We observe that hypercontractivity ribbon indeed comes from the dual of the rate region of the Gray-Wyner source coding problem, and generalize it to the multipartite case. Then we define other measures of correlation with similar properties from other source coding problems. 
We also present some applications of our results. 
\end{abstract}

\section{Introduction}
Additivity is a fundamental property of interest in information theory (e.g., see \cite{Holevo, GeneralAdditivity}) since capacity regions by their operational definition are  additive for product of identical channels or sources. Tensorization is another important property of regions in information theory which in this paper we interpret as the dual of additivity problem. Let us explain the notions of additvity and tensorization via the example of non-interactive distribution simulation \cite{KamathAnantharam}.

Fix some bipartite  distribution $p_{XY}$. Suppose that two parties, Alice and Bob, are given i.i.d.\ samples $X^n$ and $Y^n$ respectively, and they are asked to output $A$ and $B$ (respectively) distributed according to some predetermined  $q_{AB}$. Alice and Bob can choose $n$ to be as large as they want, but are not allowed to communicate. The problem of deciding whether this task is doable or not is a hard problem in general. Nevertheless, we may obtain impossibility results using the data processing inequality.  

Suppose that $I(X^n; Y^n) < I(A; B)$. In this case by the data processing inequality local transformation of $(X^n, Y^n)$ to $(A, B)$ is infeasible. However, note that mutual information is \emph{additive}, i.e., we have $I(X^n; Y^n) = n\cdot I(X; Y)$. Then, unless $X$ and $Y$ are independent,  by choosing $n$ to be large enough, $I(X^n; Y^n)$ becomes as large as we want and greater than $I(A; B)$. Therefore, the data processing inequality of mutual information does not give us any useful bound on this problem, simply because mutual information is additive. 
 
Now suppose that there is some function $\rho(\cdot, \cdot)$ of bipartite distributions that similar to mutual information satisfies the data processing inequality, but is not additive. More precisely, suppose that
$$\rho(X^n, Y^n) = \rho(X, Y).$$
That is, $\rho(\cdot, \cdot)$ extremely violates additivity and satisfies the above equation which is called the \emph{tensorization} property. Given such a measure and following the previous argument we find that local transformation of $(X^n, Y^n)$ to $(A, B)$ is impossible (even for arbitrarily large $n$) if $\rho(X, Y)< \rho(A, B)$.

In the above example we see how tensorization naturally appears as a tool to solve information theoretic problems. In the following by giving some examples we clarify the notions of additivity and tensorization and then explain our results.

\subsection{Additivity} 
Capacity regions by their operational definition are  additive for product of identical channels or sources
since they are expressed as a limit of multi-letter instances of the problem as the blocklength  goes  to infinity. For instance, consider the capacity of a point to point channel:
$$\mathcal{C}(p(y|x))=\max_{p(x)}I(X;Y).$$
By its  operational definition, the capacity of a product of identical channels is equal to the sum of the capacities of the individual channels 
$$\mathcal{C}(p(y_1|x_1)p(y_2|x_2))=\mathcal{C}(p(y_1|x_1))+\mathcal{C}(p(y_2|x_2)).$$
This is called the additivity property of the channel capacity.

Defining additivity for general network information theory problems, involving relay and feedback is more involved~\cite{GeneralAdditivity}, but for one-hop networks, when we are dealing with a rate region $\mathcal{R}(\cdot)$, we say that it is additive if 
\begin{align}\label{eq:add-general}
\mathcal{R}(p\times p)=\mathcal{R}(p)+\mathcal{R}(p),
\end{align} where $p$ is the underlying channel or joint distribution and $+$ is the Minkowski sum (point-wise sum). 

Additive regions are of fundamental importance to network information theory, not only because of the additivity of capacity regions, but also because the known upper bounds on capacity regions are additive. 

\subsection{Tensorization}
Tensorization has received relatively less attention comparing to additivity. The simplest example to illustrate the definition and applications of tensorization  is via Witsenhausen's extension~\cite{Witsenhausen} of the G\'acs-K\"orner common information \cite{GacsKorner}. Assume that Alice and Bob are observing i.i.d.\ repetitions of random variables $X^n$ and $Y^n$. Their goal is to extract common randomness via functions $f(\cdot)$ and $g(\cdot)$ such that with high probability $f(X^n)=g(Y^n)$. G\'acs and K\"orner show that unless $X=(C, X')$ and $Y=(C, Y')$ for some explicit common part $C$, the rate of common randomness extraction is zero. This result was strengthened  by Witsenhausen, who showed that if $X$ and $Y$ do not have any explicit common part, it is not possible for Alice and Bob to extract even a single common random bit. This was shown by utilizing a measure of correlation, called the \emph{maximal correlation}~\cite{Hirschfeld, Gebelein, Renyi1, Renyi2, Witsenhausen}. 

Maximal correlation of  a given bipartite probability distribution $p_{XY}$ is the maximum of Pearson's correlation coefficient over all functions of $X$ and $Y$, i.e.,
\begin{align}\label{eq:max-correlatoin-31}
\rho(X, Y)=& \max \frac{\E\big[(f_X-\E[f_X])(g_Y-\E[g_Y])\big]}{\sqrt{\Var[f_X]\Var[g_Y]}},
\end{align}
where $\E[\cdot]$ and $\Var[\cdot]$ are expectation value and variance respectively. Moreover, the maximum is taken over all non-constant functions $f_X, g_Y$ of $X$ and $Y$ respectively. Maximal correlation can equivalently be written as
\begin{align*}
\rho(X, Y) =  \max  ~& \E[f_{X}\,g_{Y}]\\
&\E_{X} [f] = \E_{Y} [g]=0, \\
& \E [f^2]= \E [g^2] =1.
\end{align*}

We always have $0\leq \rho(X, Y)\leq 1$. Moreover, $\rho(X, Y)=0$ if and only if $X$ and $Y$ are independent, and $\rho(X, Y)=1$ if and only if $X$ and $Y$ have an explicit common data as defined above~\cite{Witsenhausen}. Maximal correlation has the following two properties: 
\begin{itemize}
\item \emph{Tensorization:} We have
\begin{align}\rho(XX', YY') = \max\{\rho(X, Y), \rho(X', Y')\},\label{eqn:amnewtensor}\end{align}
when $XY$ and $X'Y'$ are independent, i.e., $p_{XX'YY'}=p_{XY}\cdot p_{X'Y'}$. 
\item \emph{Data Processing:} We have
\begin{align}\rho(X', Y') \leq \rho(X, Y),\label{eqn:dataprocessingrho}\end{align}
when $X'\rightarrow X\rightarrow Y\rightarrow Y'$ forms a Makov chain. Thus maximal correlation can be thought of as a measure of correlation
\end{itemize}

Applying the above two properties to the G\'acs-K\"orner problem we find that 
$$\rho(f(X^n), g(Y^n))\leq \rho(X^n, Y^n)=\rho(X, Y).$$ 
As a result, if $\rho(X, Y)<1$, then $\rho(f(X^n), g(Y^n))$ will also be strictly less than one. Then Witsenhausen's result is obtained using a certain continuity of maximal correlation and the fact that the maximal correlation of two perfectly correlated bits is $1$.

More generally, the tensorization and data processing properties of maximal correlation imply some bounds on the problem of non-interactive distribution simulation discussed above. That is, if we generate random variables  $A$ and $B$ from $n$ i.i.d.\ repetitions of $X$ and $Y$ respectively, i.e., if $A\rightarrow X^n \rightarrow Y^n\rightarrow B$ for some $n$, then 
\begin{align}
\rho(A,B)\leq \rho(X,Y).
\label{eqn:rho-tot}\end{align}


Tensorization is also helpful in distributed source and channel coding problems \cite{KangUlukus}. For instance, consider the problem of transmission of correlated sources over a MAC channel. Assuming that the correlated sources observed by the two transmitters are i.i.d.\ repetitions of $(A, B)$, their inputs to the MAC channel at time $i$ which we denote by $X_{i}$ and $Y_{i}$ satisfy $X_{i}\rightarrow A^n\rightarrow B^n\rightarrow Y_{i}$, and hence we must have $\rho(X_{i}, Y_{i})\leq \rho(A,B)$. Therefore, the set of possible input distributions to the MAC is restricted. This can be used to prove impossibility results in transmission of correlated sources.

In general, if $\Upsilon(p)$ is a region for a given distribution $p$, we say that it \emph{tensoizes} or has the tensorization property if 
\begin{align}\label{eq:def-tens-region}
\Upsilon(p_1\times p_2)=\Upsilon(p_1)\cap \Upsilon(p_2),
\end{align}
for any $p_1, p_2$. This in particular implies that for i.i.d.\ repetitions $p^n$ we have
\begin{align}\label{eq:weak-tensor}
\Upsilon(p^n) = \Upsilon(p).
\end{align}
Equation \eqref{eq:weak-tensor} is a weaker version of~\eqref{eq:def-tens-region}, and is called \emph{weak tensorization} property. In this paper we mostly consider this weak tensorization. So when we say tensorization, we mean~\eqref{eq:weak-tensor} unless stated otherwise.  If $\Upsilon(p)$ is a scalar (as for maximal correlation),  tensorization translates to
$$\Upsilon(p_1\times p_2)=\max\{\Upsilon(p_1),\Upsilon(p_2)\}.$$ 

Tensorizing regions serve as \emph{measures of correlation} if they satisfy an additional data processing inequality. Only two examples of tensorizing regions that satisfy the data processing inequality are known in the literature, and the other such measures are derived from these two. One of them is the hypercontractivity ribbon~\cite{AhlswedeGacs}. The other one is a generalization of maximal correlation called maximal correlation ribbon~\cite{OurPaper}. Both hypercontractivity ribbon and maximal correlation ribbon are subsets of the real plane and satisfy~\eqref{eq:def-tens-region}.

\subsection{Our contributions} 
The key idea is that given a region $\mathcal R$ that is an additive  function of the joint distribution $p(x_1, \dots, x_k)$, the cone at which $\mathcal{R}$    is seen from zero is a tensorizable function of the joint distribution. Furthermore, by subtracting any additive vector from $\mathcal{R}$ the above statement extends to cones at which $\mathcal{R}$ seen from one of its corners. This allows for 
introducing new measures of correlation that (weakly) tensorize. Our new measures are defined as the dual of the rate regions of certain source coding problems. Since by its operational definition, the source coding capacity region is additive, we get an operational proof of the tensorization property. Moreover, the source coding problems that we consider involve private links to the receivers, making it possible to use the Slepian-Wolf theorem to transmit parts of the sources through these links. We show that this implies the data processing property in the dual region. The operational proof of data processing does not rely on knowing the exact characterization of the original problem (in terms of mutual information). 

With this approach 
we define new regions that tensorize and satisfy the data processing inequality. In fact, we show that hypercontractivity ribbon and maximal correlation are simply two members of a larger class of regions with the above properties. In particular, making connections with the Gray-Wyner source coding problem, we naturally extend the definition of the hypercontractivity ribbon to the multipartite setting. 
Our construction also generalizes the technique of \emph{initial efficiency} to produce tensorizing regions from additive ones (see \cite{CuffInitial, CuffInitial2}).

\subsection{Structure of the paper} 
This paper is organized as follows. In Section \ref{sec:addtotens} we discuss how one can get tensorizing regions from additive ones. This is followed by a series of examples in Sections \ref{sec:losslesshleper},  \ref{sec:ex2}, \ref{sec:fork} and \ref{sec:cond:HC}, where new multipartite and conditional regions are defined. Section \ref{sec:computing} addresses the difficulty of computing regions based on auxiliary random variables, and provides an approach for finding alternative local regions that are easier to compute. Section \ref{sec:two-way} discusses additivity and tensorization for a two-way channel problem, and its application in simulating a two-way channel from another. 

\subsection{Notation} 
We mainly adopt the notation of \cite{elgamal}. In particular, we use  $[k]$ to denote the set $\{1,2,\dots, k\}$. We use $x_{[k]}$ to denote the sequence $(x_1, x_2, \dots, x_k)$, and $x_{[k]}^n$ to denote $(x_1^n, x_2^n, \dots, x_k^n)$ where $x_i^n=(x_{i1}, x_{i2}, \dots, x_{in})$. In general, for a subset $T$ by $x_T$ we mean the tuple of $x_i$'s for $i\in T$. The complement of subset $T$ is denoted by $T^c$.
Random variables are shown in capital letters, whereas their realizations are shown using the lowercase letters.

Expectation value and variance are respectively denoted by $\E[\cdot]$ and $\Var[\cdot]$. When expectation is computed with respect to some distribution $p(x)$ with associated random variable $X$, we sometimes denote $\E[\cdot]$ by $\E_{X}[\cdot]$. We adopt the same notation for variance too. 

Letting $p(x,y)$ be some bipartite distribution, the conditional expectation $\E_{X|Y}[\cdot]$ gives a function of $Y$ which itself is a random variable. We sometimes denote this conditional expectation by $\E[\,\cdot\, | Y]$.

The set of real numbers is denoted by $\mathbb R$, and $\mathbb R_+=[0, \infty)$ denotes the set of non-negative real numbers.

\section{From additivity to tensorization}\label{sec:addtotens}
Consider an arbitrary source coding problem, involving i.i.d.\ repetitions of random variables $(X_1, \dots, X_k)$, with some capacity rate\footnote{The region $\mathcal{R}$ depends on the joint distribution $p(x_1, \dots, x_k)$ but we adopt the common abuse of notation in information theory to write it as $\mathcal{R}(X_1, \dots, X_k)$.} region $\mathcal{R}(X_1, \dots, X_k)$ consisting of rate tuples $(R_1, \dots, R_m)$. The definition of the source coding problem can be quite arbitrary; we only use the fact that from the operational definition of the rate region we have
\begin{align}
(R_1, \dots, R_m)\in \mathcal{R}(X_1, \dots, X_k)  \quad\Longleftrightarrow \quad (nR_1, \dots, nR_m)\in \mathcal{R}(X_1^n,  \dots, X_k^n)
\label{addveq},
\end{align}
where $(X_1^n, X_2^n, \dots, X_k^n)$ is $n$ i.i.d.\ repetitions of $(X_1, X_2, \dots, X_k)$.

Let $\lambda_i$ for $i\in[m]$, and $ \theta_S$ for non-empty subsets $S\subset [k]$ be arbitrary real numbers. We divide these variables into two sets, fixing the values of variables in the first set and treating the variables of the second set as free variables. More specifically, let $T\subseteq [m]$ and $\Delta\subseteq 2^{[k]}\setminus\{\emptyset\}$ be arbitrary subsets, and take  $\lambda_{T}$ (shorthand for $\lambda_i$ for $i\in T$) and $\theta_{\Delta}$ (shorthand for $\theta_S$ for $S\in \Delta$) as free variables, and fix the remaining $\lambda_{T^c}$ and $\theta_{\Delta^c}$ as some real numbers.
Then consider the following real valued function  $F_{X_{[k]}}=F_{X_1, \dots, X_k}$ on the free variables and rates
\begin{align}
F_{X_{[k]}}\big(\lambda_{T}, \theta_{\Delta}, R_{[m]}\big)&=\sum_{i=1}^m\lambda_i R_i + \sum_{S\subset [k], S\neq \emptyset}\theta_S H(X_S).
\label{def:F}
\end{align}
By taking maximum over all rates in the capacity region we define
\begin{align}
G_{X_{[k]}}\big(\lambda_{T}, \theta_{\Delta}\big)&=\max_{ R_{[m]}\in\mathcal{R}(X_{[k]})}F_{X_{[k]}}\big(\lambda_{T}, \theta_{\Delta}, R_{[m]}\big).
\label{def:G}
\end{align}
Now, consider the following region in $\mathbb{R}^{|T|+|\Delta|}$ of the values for the free parameters such that $G_{X_{[k]}}$ is not positive:
\begin{align}\Upsilon(X_{[k]})=\big\{&(\lambda_{T}, \theta_{\Delta})|\, G_{X_{[k]}}(\lambda_{T}, \theta_{\Delta})\leq 0\big\}.\label{defupsilon}
\end{align}
The following theorem states that $\Upsilon(X_{[k]})$, which can be understood as the dual of the rate region $\mathcal R(X_{[k]})$, has the tensorization property. 
 
\begin{theorem}\label{thm:main}The function $G_{X_{[k]}}(\lambda_{T}, \theta_{\Delta})$ is additive and the region $\Upsilon(X_{[k]})$ tensorizes. More precisely, for any natural number $n$ we have
\begin{align}
G_{X^n_{[k]}}(\lambda_{T}, \theta_{\Delta})&=n\cdot G_{X_{[k]}}(\lambda_{T}, \theta_{\Delta}), \label{tensequ1}
\end{align}
and
\begin{align}
\Upsilon(X_{[k]}^n)&=\Upsilon(X_{[k]}).
\label{tensequ}
\end{align}
\end{theorem}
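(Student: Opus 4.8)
The plan is to deduce both assertions directly from the operational additivity relation \eqref{addveq}, combined with the only property of Shannon entropy that we need, namely its additivity under i.i.d.\ repetition: $H(X^n_S) = n\,H(X_S)$ for every nonempty $S\subseteq[k]$. Starting from the definition \eqref{def:G} applied to $n$ i.i.d.\ copies, $G_{X^n_{[k]}}(\lambda_{T},\theta_{\Delta})$ is the maximum over $R'_{[m]}\in\mathcal{R}(X^n_{[k]})$ of $\sum_{i=1}^m\lambda_i R'_i + \sum_{S}\theta_S H(X^n_S)$. The entropy block is already a clean multiple of $n$, equal to $n\sum_S\theta_S H(X_S)$, and it does not depend on the optimization variable, so it can be treated as an additive constant.

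Next I would apply the change of variables $R'_i = nR_i$. By \eqref{addveq}, as $R_{[m]}$ ranges over $\mathcal{R}(X_{[k]})$ the point $(nR_1,\dots,nR_m)$ ranges over exactly $\mathcal{R}(X^n_{[k]})$, so rewriting the maximization in the variables $R_{[m]}$ changes neither the feasible set nor the optimal value. After the substitution, each rate term becomes $\lambda_i R'_i = n\lambda_i R_i$, which carries a factor of $n$; crucially this is uniform over $i\in T$ and $i\in T^c$ alike, and likewise every $\theta_S H(X^n_S)$ term scales by $n$, so the full objective satisfies $F_{X^n_{[k]}}(\lambda_{T},\theta_{\Delta},R'_{[m]}) = n\,F_{X_{[k]}}(\lambda_{T},\theta_{\Delta},R_{[m]})$. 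Pulling the nonnegative constant $n$ out of the maximum gives \eqref{tensequ1}. (If the supremum happens to be $+\infty$ for some parameter values, the identity still reads correctly in the extended reals and the next step is unaffected.)

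Equation \eqref{tensequ} then follows immediately from \eqref{tensequ1}: since $n\ge 1>0$, the condition $G_{X^n_{[k]}}(\lambda_{T},\theta_{\Delta})\le 0$ defining $\Upsilon(X^n_{[k]})$ in \eqref{defupsilon} is equivalent to $n\,G_{X_{[k]}}(\lambda_{T},\theta_{\Delta})\le 0$, hence to $G_{X_{[k]}}(\lambda_{T},\theta_{\Delta})\le 0$, which is the condition defining $\Upsilon(X_{[k]})$. I do not expect a genuine obstacle here: the theorem is essentially a bookkeeping consequence of operational additivity, and the only points needing care are verifying that $R'_i=nR_i$ is a bijection between the two rate regions — which is precisely \eqref{addveq} — and observing that the fixed parameters $\lambda_{T^c},\theta_{\Delta^c}$, held at the same numerical values on both sides, also contribute terms homogeneous of degree one in the blocklength, so that the entire function $F$, not merely its free part, scales by $n$.
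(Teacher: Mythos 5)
Your proposal is correct and follows exactly the paper's argument: the identity $F_{X^n_{[k]}}(\lambda_{T},\theta_{\Delta},nR_{[m]})=n\,F_{X_{[k]}}(\lambda_{T},\theta_{\Delta},R_{[m]})$ from the additivity of entropy, combined with the bijection $R_{[m]}\mapsto nR_{[m]}$ between the two rate regions given by \eqref{addveq}, yields \eqref{tensequ1}, and \eqref{tensequ} follows since $n>0$. No gaps.
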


\begin{proof} 
Observe that from equation~\eqref{def:F} we have
$$F_{X_{[k]}^n}(\lambda_{T}, \theta_{\Delta}, n R_{[m]})=n F_{X_{[k]}}(\lambda_{T}, \theta_{\Delta}, R_{[m]}).$$
Furthermore, by the additivity of the rate region (equation~\eqref{addveq}) we have $R_{[m]}\in\mathcal{R}(X_{[k]})$ if and only if $n R_{[m]}\in\mathcal{R}(X^n_{[k]})$. This implies equation \eqref{tensequ1}. Equation \eqref{tensequ1} in turn implies~\eqref{tensequ} 
by the definition of $\Upsilon(X_{[k]})$. 
\end{proof}

In the above theorem we prove the additivity of $G_{X_{[k]}}(\lambda_{T}, \theta_{\Delta})$ and the tensorization of $\Upsilon(X_{[k]})$ only in a weak sense, when we consider only i.i.d.\ repetitions of $X_{[k]}$. To prove tensorization in the most general case, i.e., to prove~\eqref{eq:def-tens-region}, we need a stronger version of the additivity of the rate region $\mathcal R(X_{[k]})$ expressed in~\eqref{eq:add-general}.  Indeed assuming that we start with a source coding problem whose rate region satisfies~\eqref{eq:add-general}, the proof of~\eqref{eq:def-tens-region} is obtained by a simple modification of the above argument. However, in this paper  we mostly focus on the tensorization property in its weak sense.  

Observe that Theorem \ref{thm:main} still holds if we more generally replace the entropy function in equation \eqref{def:F} with any other additive function (such as an average cost function).

By the above theorem from any source coding problem we can define a region $\Upsilon(X_{[k]})$ with the tensorization property. Nevertheless, we would like such a region to satisfy the data processing property.

\subsection{Data processing} 
Data processing is another property that we like to prove for $\Upsilon(X_{[k]}).$ That is  for any $$p(y_1,\dots, y_k|x_1, \dots, x_k)=\prod_{i=1}p(y_i|x_i),$$ we would like to have 
\begin{align}
\Upsilon(X_{[k]})\subseteq \Upsilon(Y_{[k]}).\label{eqn:defdata}
\end{align}
The data processing property holds if we can show that $G_{X_{[k]}}$ is decreasing under local stochastic maps, i.e., for any values of $\lambda_{T}$ and $\theta_{\Delta}$ we have
\begin{align}
G_{Y_{[k]}}(\lambda_{T}, \theta_{\Delta})
\leq G_{X_{[k]}}(\lambda_{T}, \theta_{\Delta}).\label{eqn:defdata22}
\end{align}

Data processing does not hold for the dual of any arbitrary source coding problem. Indeed, we should consider an appropriate source coding problem and  an appropriate choice of the fixed parameters $\lambda_{T_1^c}$ and $\theta_{T_2^c}$ for the data processing property to hold.  We have an operational proof of this property when the source coding problem is structured, which we illustrate through concrete examples in the subsequent sections.

\subsection{Connection with initial efficiency} 
Initial efficiency of a rate $R_1$ with respect to a rate $R_2$ is defined as follows  \cite{Erkip,Zhao}.\footnote{
Initial efficiency can be defined more generally in terms of other quantities, e.g.,\ as in capacity per unit cost \cite{Verdu90}.} Let $g(r)$ be the maximum value of $R_1$ when $R_2$ is less than or equal to $r$. That is,
\begin{align}g(r)=\max \{R_1|\, R_{[m]}\in \mathcal R(X_{[k]}), R_2\leq r \}.
\label{def-g-r}
\end{align}
Further assume that $g(0)=0$, meaning that $R_2=0$ implies $R_1=0$. Then $g'(0)$, the derivative of $g(r)$ at $r=0$, is called the initial efficiency of a rate $R_1$ with respect to rate $R_2$. Initial efficiently quantifies how large $R_1$ becomes when we slightly increase $R_2$ from $0$. 

It is not hard to see that the initial efficiency tensorizes by its operational definition when we start with an additive rate region~\cite{CuffInitial, CuffInitial2}. Then the idea of initial efficiently provides a tool to obtain functions with the tensorization property. Here show that this method is a special case of our construction of tensorizing regions, but before that let us clarify the idea of initial efficiency by an example.  

\begin{example}
Let us consider the example of common randomness extraction using one-way communication. Consider two parties who observe i.i.d.\ repetitions of $X$ and $Y$. There is a one-way communication of limited rate $R$ from the first party to the second. Then, the maximum rate of common randomness that can be generated from this source is \cite{AhswedeCsiszar}
$$g(r)=\max_{p(u|x):I(X;U)-I(Y;U)\leq r}I(X;U).$$
By definition $g(0)$ is equal to the G\'acs-K\"orner common information. Assuming that $g(0)=0$, the initial efficiency~\cite{Zhao} is equal to 
$$g'(0)=\lim_{r\searrow 0}\frac{g(r)}{r}=\frac{1}{1-(s^*(X;Y))^2},$$
where $$s^*(X,Y)=\max_{p(u|x)}\frac{I(Y;U)}{I(X;U)}.$$ 
As we discuss later $s^*$ in addition to tensorization satisfies the data processing property as well. 
\end{example}

We now show that initial efficiency can be derived from our construction of tensorizing regions. 
Suppose that the rate region $\mathcal{R}(X_1, \dots, X_m)$ is convex. Then the convexity of $\mathcal{R}(X_1, \dots, X_m)$ implies that $g(r)$ defined in~\eqref{def-g-r} is concave. As a result, from $g(0)=0$ we obtain
\begin{align*}
g'(0)=\lim_{r\searrow 0}\frac{g(r)}{r}=\max_{r\neq 0}\frac{g(r)}{r} =\max_{\stackrel{R_{[m]}\in\mathcal{R}(X_{[k]})}{R_2\neq 0}}\frac{R_1}{R_2}.
\end{align*}
Therefore, $g'(0)$ is equal to the minimum value of $\lambda_2$ such that $R_1-\lambda_2 R_2\leq 0$ for all $R_{[m]}\in \mathcal R(X_{[m]})$.
Then defining $F(\lambda_2, R_{[m]})=R_1-\lambda_2 R_2$, its associated region $\Upsilon$ is equal to $[g'(0), \infty)$. We see that initial efficiency is a special case of our construction of tensorizing regions. 


\section{Example 1: Lossless source coding with a helper}\label{sec:losslesshleper}
In the problem of source coding with helper, there is a transmitter, a helper and a receiver. The transmitter has access to i.i.d.\ repetitions $X^n$ and the helper has access to $Y^n$ where $(X, Y)$ have a joint distribution $p_{XY}$. The goal of receiver is to recover $X^n$. See Figure~\ref{Fig:LosslessHelper}.

An  $(n, \epsilon, M_1, M_2)$ code for this problem consists of encoder maps $M_1=\mathcal{E}_1(X^n)$ and $M_2=\mathcal{E}_2(Y^n)$, and a decoder map $\hat{X}^n=\mathcal{D}(M_1, M_2)$. The probability of error is equal to $\epsilon=p(\hat{X}^n\neq X^n)$, and the rate pair of this code is $(R_1, R_2)$ where $R_1=\frac{1}{n}\log |\mathcal M_1|$ and $R_2=\frac{1}{n}\log |\mathcal M_2|$. We let $\mathcal R^h(X, Y)$ to be the set of pairs $(R_1, R_2)$ for which there is a sequence of codes $(n, \epsilon_n, M_1, M_2)$ with asymptotic rate $(R_1, R_2)$ such that $\epsilon_n\rightarrow 0$ as $n$ tends to infinity. 

Define 
$$F^h_{X,Y}(\lambda, R_1, R_2)=-\lambda R_1- R_2+\lambda H(X).$$
Observe that $F^h_{X,Y}(\lambda, R_1, R_2)$ has the format of~\eqref{def:F}. Accordingly define 
$$G^h_{X,Y}(\lambda)=\max_{(R_1, R_2)\in \mathcal{R}^h(X,Y)}F^h_{X,Y}(\lambda, R_1, R_2)$$
and 
$$\Upsilon^h(X,Y)=\{\lambda|\, G(\lambda)\leq 0\}.$$
Observe that $(R_1, 0)$ for sufficiently large $R_1$ is in $\mathcal R^h(X, Y)$. Then $\lambda\geq 0$ for any $\lambda\in \Upsilon^h(X,Y)$. 

By Theorem \ref{thm:main}, the set $\Upsilon^h(X,Y)$ tensorizes, i.e. 
\begin{align*}
\Upsilon^h(X^n,Y^n)= \Upsilon^h(X,Y), \qquad \forall n.
\end{align*}
We now show (via an operational proof) that $\Upsilon^h(X,Y)$ also satisfies the data processing property. That is, for all  stochastic maps $p(y'|y)$ and $p(x'|x)$ we have
\begin{align}
\Upsilon^h(X,Y)\subseteq \Upsilon^h(X',Y').
\label{thisfact}
\end{align} 
To prove this it suffices to show that for any $\lambda$ we have
\begin{align}
G^h_{X',Y'}(\lambda)\leq G^h_{X,Y}(\lambda).
\label{eqn:newG1}
\end{align}
By the functional representation lemma~\cite[Appendix B]{elgamal}, any stochastic map can be decomposed as adding some private randomness and application of some function. That is, there are functions $f$ and $g$ such that 
$X'=f(X, A)$ and $Y'=g(X, B)$ where $A$ and $B$ are independent of each other and of $(X, Y)$. Then to show~\eqref{eqn:newG1} we need to prove the followings:
\begin{enumerate}
\item [I.] If $X',Y'$ are functions of $X, Y$ respectively, i.e., if $H(X'|X)=H(Y'|Y)=0$, then $G^h_{X',Y'}(\lambda)\leq G^h_{X,Y}(\lambda)$.
\item[II.]  $G^h_{AX,BY}(\lambda)=G^h_{X,Y}(\lambda)$ if $A$ and $B$ are mutually independent of each other, and of $(X,Y)$.
\end{enumerate}  
Putting the functional representation lemma and the above two cases together, equation~\eqref{eqn:newG1} is implied immediately. In the following we prove the above two claims separately.  
 
\begin{figure}
\begin{center}
\includegraphics[width=3.3in]{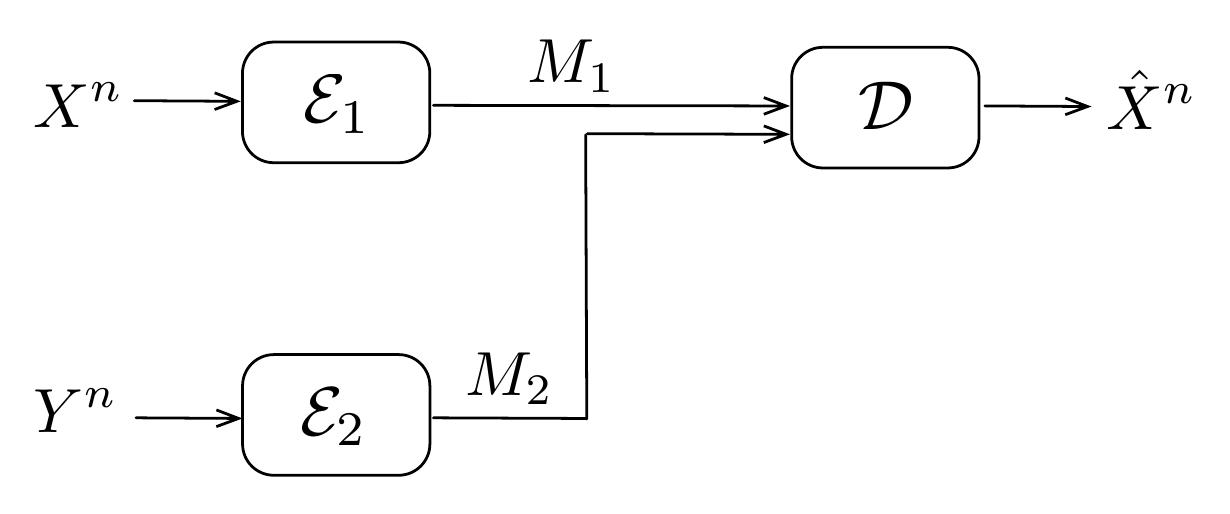}
\caption{Lossless source coding with a helper}
\label{Fig:LosslessHelper}
\end{center}
\end{figure}

\vspace{.15in}
\noindent
\emph{Proof of I.} We need to show that for any $\lambda$
\begin{align*}\max_{(R'_1, R'_2)\in\mathcal{R}^h(X',Y')}-\lambda R'_1- R'_2+\lambda H(X')
\leq \max_{(R_1, R_2)\in\mathcal{R}^h(X,Y)}-\lambda R_1- R_2+\lambda H(X).\end{align*}
Using the fact that $H(X)=H(XX')=H(X')+H(X|X')$, it suffices to show that if $(R'_1, R'_2)\in\mathcal{R}^h(X',Y')$, then $(R_1, R_2)=(R'_1+H(X|X'), R'_2)\in\mathcal{R}^h(X,Y)$.
To show this, fix a code for the source $(X',Y')$ with rate pair of $(R'_1, R'_2)$. Now consider the following protocol for the source $(X, Y)$: the transmitter and helper compute $X', Y'$ from $X, Y$ respectively, and then use the above code to send $X'$ to the receiver. Then using the Slepian-Wolf theorem, the transmitter by sending $H(X|X')$ extra bits (on average) sends $X$ to the receiver. In this protocol the helper sends information at rate $R_2=R'_2$ and the transmitter sends information at rate $R_1=R'_1+H(X|X')$.

\vspace{.15in}
\noindent
\emph{Proof of II.} From the definition of the source coding problem it is clear that $G^h_{AX,BY}(\lambda)=G^h_{AX,Y}(\lambda)$ since $B$ has the role of private randomness of the helper. It remains to show that 
$G^h_{AX,Y}(\lambda)=G^h_{X,Y}(\lambda)$. Since $X$ is a function of $(A, X)$, using part I we have
$$G^h_{X,Y}(\lambda)\leq G^h_{AX,Y}(\lambda).$$
Thus, we need  to show that $G^h_{X,Y}(\lambda)\geq G^h_{AX,Y}(\lambda)$, or equivalently
\begin{align*}\max_{(R_1, R_2)\in\mathcal{R}^h(X,Y)}-\lambda R_1- R_2+\lambda H(X)\geq 
\max_{(R_1, R_2)\in\mathcal{R}^h(AX,Y)}-\lambda R_1- R_2+\lambda H(AX).\end{align*}
To prove this we show that for any $(R_1, R_2)\in\mathcal{R}^h(AX,Y)$, we have $(R_1-H(A), R_2)\in\mathcal{R}^h(X,Y)$. To show this, we again use the Slepian-Wolf theorem. 

Fix $(R_1, R_2)\in\mathcal{R}^h(AX,Y)$ and a sequence of codes $(n, \epsilon_n, M_1, M_2)$ achieving this point. Since $M_2$ is generated from $Y^n$, it is independent of $A^n$. Then using the Fano inequality we have
\begin{align*}
H(M_1|M_2A^n)&=H(M_1|M_2)-I(M_1;A^n|M_2)
\\&= H(M_1|M_2)-I(M_1M_2;A^n)
\\&\leq H(M_1)-H(A^n)+o(n)
\\&= n\big(R_1-H(A)+o(1)\big),
\end{align*}
where in the third line we use the fact that $A^n$ can be recovered from $(M_1, M_2)$ with probability at least $1-\epsilon_n$. Next, following similar ideas we have
\begin{align}H(X^n|M_2A^n)&=H(X^nM_1|M_2A^n)\nonumber
\\&= H(M_1|M_2A^n)+H(X^n|M_1M_2A^n)\nonumber
\\&\leq H(M_1|M_2A^n)+o(n)\nonumber
\\&\leq n\big(R_1-H(A)+o(1)\big),
\label{thfa2}
\end{align}
where in the last line we use the previous inequality. 

We now construct a protocol that shows $(R_1-H(A), R_2)\in\mathcal{R}^h(X,Y)$. Think of $A$ as shared randomness between the transmitter and the receiver. Note that shared randomness does not change the rate region $\mathcal{R}^h(X,Y)$. In the new protocol the helper uses the same encoding map to create $M_2$ from $Y^n$. Then the receiver has $A^n$ in hand and gets $M_2$ from the helper. 
Then by the Slepian-Wolf theorem, if we consider $N$ i.i.d.\ repetitions of this code, the transmitter needs to send only $H(X^n|M_2A^n)+o(n)$ bits on average to convey $X^n$ to the receiver. 
In this protocol the rate of communication from the helper is $R_2$ and the rate of 
communication from transmitter is $\frac{1}{n}H(X^n|M_2A^n)+o(1)$ which using~\eqref{thfa2} is at most $R_1-H(A)+o(1)$. Then $(R_1-H(A), R_2)\in \mathcal R^h(X, Y)$.

\vspace{.15in}
By the above discussion $\Upsilon^h(X,Y)$ satisfies the tensorization and data processing properties. Note that for proving these properties, we did not use the characterization of the capacity region $\mathcal R^h(X, Y)$; we proved these properties via operational arguments and used only the Slepian-Wolf theorem.  Nevertheless, we may use the characterization of $\mathcal R^h(X, Y)$ to compute $\Upsilon^h(X, Y)$.

From \cite[Theorem 10.2]{elgamal} the capacity region $\mathcal{R}^h(X,Y)$ is equal to the set of pairs $(R_1, R_2)$ satisfying
\begin{align*}R_1&\geq H(X|U),
\qquad R_2\geq I(Y;U),
\end{align*}
for some conditional distribution $p(u|y)$. 
Then for non-negative values of $\lambda$ we have \begin{align}G^h_{X,Y}(\lambda)= \max_{U-Y-X}\lambda I(X;U)- I(Y;U).\label{defGh}
\end{align}
Therefore, $\lambda\in \Upsilon(X, Y)$ if and only if $\lambda I(X;U)- I(Y;U)\leq 0$ for all $p(u|y)$. Equivalently, $\lambda\in \Upsilon^h(X,Y)$ iff 
$$\frac{1}{\lambda}\geq \max_{U-Y-X}\frac{I(X;U)}{I(Y;U)}=s^*(Y, X).$$
Therefore, our discussion above provides a proof for the fact that $s^*(Y, X)$ tensorizes and satisfies the data processing inequality.
 
By the above discussion $s^*(Y, X)$ is the initial efficiency of the one-helper source coding problem: let $h(R_2)$ be the minimum value of $R_1$ for a given $R_2$. Then $h(0)=H(X)$. Let $g(R_2)=h(0)-h(R_2)$. Then  
$$s^*(Y, X)=\max_{\stackrel{R_{[2]}\in \mathcal R^h(X, Y)}{R_2\neq 0}}\frac{g(R_2)}{R_2}.$$
\color{black}


\section{Example 2: One side-information source problem}\label{sec:ex2}

The one side-information source problem \cite[Problem 16.6 (c)]{csiszarbook} is a generalization of the problem considered in Section \ref{sec:losslesshleper}. Here there are $k$ transmitters, one helper and $k$ receivers. Transmitter $i$, $1\leq i\leq k$, observes i.i.d.\ repetitions $X_i^n$ and the helper observes i.i.d.\ repetitions $X_{k+1}^n$. The $i$-th transmitter sends information at rate $R_i$ to receiver $i$, and helper broadcasts information to all receivers at rate $R_{k+1}$. The goal of the $i$-th receiver is to recover $X_i^n$. See Figure~\ref{fig:OneSideInf}. We denote the set of achievable rate tuples $(R_1, \dots, R_{k+1})$ for this problem by $\mathcal R^s(X_1, \dots, X_{k+1})$.

To obtain a dual for this rate region let us define
 \begin{align*}
 F^s_{X_{[{k+1}]}}(\lambda_{[k]}, R_{[{k+1}]})=-R_{k+1}-\sum_{i=1}^{k}\lambda_iR_i +\sum_{i=1}^{k}\lambda_iH(X_i).
\end{align*}
Then let
\begin{align*}
G^s_{X_{[{k+1}]}}(\lambda_{[k]})&=\max_{R_{[{k+1}]}\in\mathcal{R}^s(X_{[{k+1}]})}F^s_{X_{[{k+1}]}}(\lambda_{[k]}, R_{[{k+1}]}),
\end{align*}
and
\begin{align*}
\Upsilon^s(X_{[{k+1}]})&=\{\lambda_{[k]}|\, G^s_{X_{[{k+1}]}}(\lambda_{[k]})\leq 0\}.
\end{align*}
Again for sufficiently large $R_1, \dots, R_{k}$ we have $(R_1, \dots, R_{k}, 0)\in \mathcal R^s(X_{[{k+1}]})$. Then for any $\lambda_{[k]}\in \Upsilon^s(X_{[{k+1}]})$ we have $\lambda_i\geq 0$.

\begin{figure}
\begin{center}
\includegraphics[width=3.3in]{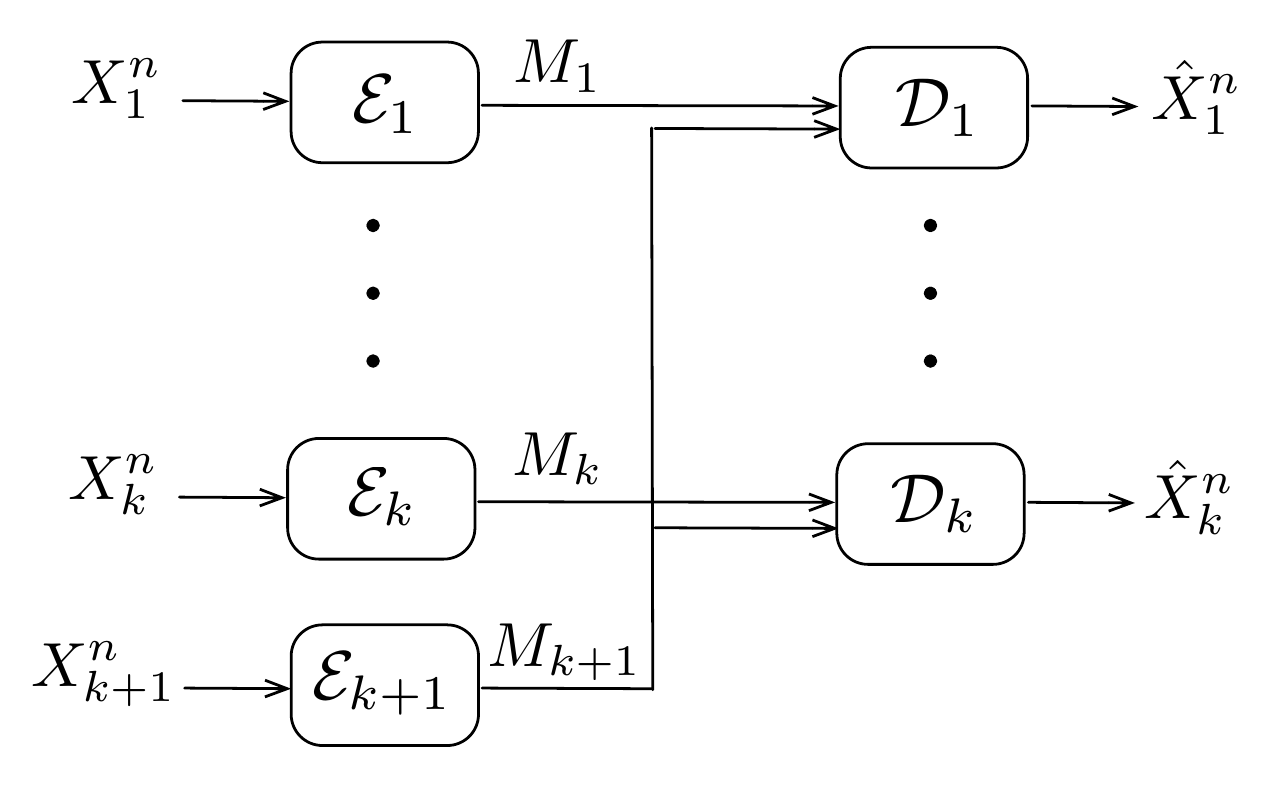}
\caption{One side-information source problem}\label{fig:OneSideInf}
\end{center}
\end{figure}

By Theorem~\ref{thm:main} the function $G^s_{X_{[{k+1}]}}(\lambda_{[k]})$ is additive and the set $\Upsilon^s(X_{[{k+1}]})$ satisfies tensorization. We claim that $\Upsilon^s(X_{[{k+1}]})$ also satisfies the data processing property. To prove this claim it suffices to 
show that for any $p(x'_i|x_i)$ we have
$$G^s_{X'_{[{k+1}]}}(\lambda_{[k]})\leq G^s_{X_{[{k+1}]}}(\lambda_{[k]}).$$ 
The proof of this inequality is completely similar to the proof of~\eqref{eqn:newG1} given in the previous section and we do not repeat it in full details here. Briefly speaking, as before we first use the functional representation lemma to break the proof in two parts. We first consider 
the case where $X'_i$ is a function of $X_i$; here we argue that  it suffices to show that if $R'_{[{k+1}]}\in\mathcal{R}^s(X'_{[{k+1}]})$, then 
$$\big(R'_1+H(X_1|X'_1), \dots, R'_{k}+H(X_{k}|X'_{k}), R'_{k+1}\big)\in\mathcal{R}^s(X_{[{k+1}]}).$$ 
This follows again from the Slepian-Wolf theorem. Next, we show that 
$G^s_{A_{[{k+1}]}X_{[{k+1}]}}(\lambda_{[k]})=G^s_{X_{[{k+1}]}}(\lambda_{[k]})$ when $A_{1}, \dots, A_{k+1}$ are independent of each other of of $X_{[{k+1}]}$. For this we show that 
 if $R^s_{[{k+1}]}\in\mathcal{R}^s(A_1X_1,\dots, A_{k}X_{k}, X_{k+1})$, then 
 $$(R_1-H(A_1), \dots, R_{k}-H(A_{k}), R_{k+1})\in\mathcal{R}^s(X_{[{k+1}]}).$$ 
 This follows again from thinking of $A_{[k]}$ as shared randomness among the parties and using the Fano inequality and Slepian-Wolf theorem.

Now we have region $\Upsilon^s(X_{[{k+1}]})$ that tensorizes and satisfies data processing. Using~\cite[Problem 16.6 (c)]{csiszarbook}, the capacity region $\mathcal R^s(X_{[{k+1}]})$ of this problem is given by 
\begin{align}
R_{k+1}&\geq I(U;X_{k+1}),\label{eq:gw-1f}\\
R_i&\geq H(X_i|U),\quad \forall i\in[k].\label{eq:gw-2f}
\end{align}
for some $U-X_{k+1}-X_{[k]}$. Therefore, for non-negative tuples $\lambda_{[k]}$, we have
\begin{align}
G^s_{X_{[{k+1}]}}(\lambda_{[k]})=\max_{U-X_{k+1}-X_{[k]}}-I(X_{k+1};U)+\sum_{i=1}^{k}\lambda_i I(X_i;U).\label{defGinV6}
\end{align}
As a result, $\lambda_{[k]}\in \Upsilon^s(X_{[{k+1}]})$ iff 
$$\sum_{i=1}^{k}\lambda_i I(X_i;U)\leq I(X_{k+1};U),$$
for every $U-X_{k+1}-X_{[k]}$. The following theorem summarizes the above findings.

\begin{theorem}\label{defreg1} 
For any distribution $p_{X_{[{k+1}]}}$ let $\Upsilon^s(X_{[{k+1}]})$ be the set of all non-negative $\lambda_{[k]}$ such that 
$$\sum_{i=1}^{k}\lambda_i I(X_i;U)\leq I(X_{k+1};U),$$
for all $p(u|x_{k+1})$. Then $\Upsilon^s(X_{[{k+1}]})$ satisfies the data processing inequality and tensorization.
\end{theorem}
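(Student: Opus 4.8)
The plan is to recognize that the explicitly-described set in the statement coincides with the region $\Upsilon^s(X_{[{k+1}]})$ obtained from the general construction of Section~\ref{sec:addtotens} applied to the one side-information source coding problem, and then to deduce tensorization from Theorem~\ref{thm:main} and data processing from an operational argument in the style of Example~1. First I would substitute the single-letter characterization \eqref{eq:gw-1f}--\eqref{eq:gw-2f} of $\mathcal R^s(X_{[{k+1}]})$ from \cite[Problem 16.6 (c)]{csiszarbook} into the definition of $G^s_{X_{[{k+1}]}}$. For non-negative $\lambda_{[k]}$ the optimal rate point takes $R_{k+1}=I(U;X_{k+1})$ and $R_i=H(X_i|U)$, and using $H(X_i)-H(X_i|U)=I(X_i;U)$ this yields \eqref{defGinV6}; hence $\lambda_{[k]}\in\Upsilon^s$ iff $\sum_i\lambda_i I(X_i;U)\le I(X_{k+1};U)$ for all $U-X_{k+1}-X_{[k]}$, which is exactly the set in the statement. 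Restricting to non-negative $\lambda_{[k]}$ loses nothing, since $(R_1,\dots,R_k,0)\in\mathcal R^s$ for sufficiently large $R_i$ forces each $\lambda_i\ge 0$ on $\Upsilon^s$.

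Tensorization is then immediate: $\mathcal R^s$ satisfies the operational additivity \eqref{addveq} because an $(n,\epsilon)$-code for $n$ i.i.d.\ repetitions of $(X_{[{k+1}]}^m)$ is precisely an $(nm,\epsilon)$-code for $X_{[{k+1}]}$, and $F^s_{X_{[{k+1}]}}$ has the format \eqref{def:F}. Theorem~\ref{thm:main} therefore gives $G^s_{X^n_{[{k+1}]}}=n\cdot G^s_{X_{[{k+1}]}}$ and $\Upsilon^s(X^n_{[{k+1}]})=\Upsilon^s(X_{[{k+1}]})$.

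For data processing it suffices to prove $G^s_{X'_{[{k+1}]}}(\lambda_{[k]})\le G^s_{X_{[{k+1}]}}(\lambda_{[k]})$ whenever each $X'_i$ is obtained from $X_i$ by a local channel $p(x'_i|x_i)$. By the functional representation lemma \cite[Appendix B]{elgamal}, write $X'_i=f_i(X_i,A_i)$ with $A_1,\dots,A_{k+1}$ mutually independent and independent of $X_{[{k+1}]}$, and split into two claims. (I) If each $X'_i$ is a deterministic function of $X_i$, then any $R'_{[{k+1}]}\in\mathcal R^s(X'_{[{k+1}]})$ can be upgraded to $\big(R'_1+H(X_1|X'_1),\dots,R'_k+H(X_k|X'_k),R'_{k+1}\big)\in\mathcal R^s(X_{[{k+1}]})$: each transmitter first transmits $X'_i$ with the given code, then by the Slepian--Wolf theorem spends $H(X_i|X'_i)$ extra bits per symbol to resolve $X_i$ at receiver~$i$, while the helper's rate is unchanged; combined with $H(X_i)=H(X'_i)+H(X_i|X'_i)$ this gives the desired inequality for $G^s$. (II) Adjoining the $A_i$ does not change $G^s$: the helper's variable $A_{k+1}$ is pure private randomness and is dropped for free, and for the transmitters one treats $A_i^n$ as randomness shared between transmitter~$i$ and receiver~$i$ and runs the Fano-inequality computation exactly as in \eqref{thfa2} to obtain $\tfrac1n H(X_i^n\mid M_{k+1}A_i^n)\le R_i-H(A_i)+o(1)$, so that re-encoding via Slepian--Wolf over $N$ independent blocks realizes the point $(R_1-H(A_1),\dots,R_k-H(A_k),R_{k+1})$ in $\mathcal R^s(X_{[{k+1}]})$. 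Composing (I) and (II) through the functional representation yields \eqref{eqn:defdata22}, hence $\Upsilon^s(X_{[{k+1}]})\subseteq\Upsilon^s(X'_{[{k+1}]})$.

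The identification with \eqref{defGinV6} and claim (I) are routine. The step needing care is claim (II): one must check that the chain bounding $H(X_i^n\mid M_{k+1}A_i^n)$ uses only that $M_{k+1}$ is a function of $X_{k+1}^n$ (so independent of $A_i^n$) and that $(A_i^n,X_i^n)$ is recoverable from $(M_i,M_{k+1})$, and that pre-sharing $A_i^n$ between transmitter~$i$ and receiver~$i$ does not enlarge $\mathcal R^s$ in this multi-receiver setting. I expect this bookkeeping, rather than any conceptual difficulty, to be the main obstacle.
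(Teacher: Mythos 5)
Your proposal is correct and follows essentially the same route as the paper: identify the region as the dual $\Upsilon^s$ of the one side-information source coding problem via \eqref{defGinV6}, invoke Theorem~\ref{thm:main} for tensorization, and prove data processing operationally by splitting through the functional representation lemma into the deterministic-function case (Slepian--Wolf) and the independent-randomness case (Fano plus Slepian--Wolf with $A_i^n$ as shared randomness), exactly as the paper sketches by reference to its Example~1. The bookkeeping you flag in claim (II) is precisely what the paper also leaves to the reader.
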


The region $\Upsilon^s(X_{[{k+1}]})$ is non-empty; by data processing inequality if $U-X_{k+1}-X_{[k]}$ forms a Markov chain, we have $I(X_i;U)\leq I(X_{k+1};U)$. Then $\Upsilon^s(X_{[{k+1}]})$ includes any $\lambda_{[k]}$ satisfying 
$0\leq \lambda_i$ and $\sum_{i=1}^{k}\lambda_i\leq 1$.

\begin{example}
Consider the special case where $k=2$ and $X_3=(X_1, X_2)$. In this case $\Upsilon^s(X_{[3]})$ is equivalent to the following region:
$$ \fR(X_1, X_2)=\big\{(\lambda_1, \lambda_2)\in \mathbb{R}_{+}^2|\, \lambda_1 I(X_1;U)+\lambda_2 I(X_2;U)\leq I(X_1X_2;U)\big\}.$$
Then $\fR(X_1, X_2)$ satisfies tensorization and data processing properties. 
\end{example}

Observe that in the special case of  $k=2$ and $X_3=(X_1, X_2)$, the rate region given in equations \eqref{eq:gw-1f} and \eqref{eq:gw-2f} reduces to that of the Gray-Wyner rate region~\cite{GrayWyner}. Then $\fR(X_1, X_2)$ can be understood as the dual of the Gray-Wyner region.

By the following theorem of Nair~\cite{Nair} gives another characterization of $\fR(X_1, X_2)$ defined above.

\begin{theorem}[\cite{Nair}] $(\lambda_1, \lambda_2)\in \fR(X_1, X_2)$ if and only if for every pair of functions $f_{X_1}:\mathcal{X}_1\rightarrow \mathbb R$ and $g_{X_2}:\mathcal {X}_2\rightarrow \mathbb R$ we have
\begin{align}\label{eq:fg-norm-2}
\E[f_{X_1}g_{X_2}]\leq \|f_{X_1}\|_{\frac{1}{\lambda_1}}\|g_{X_2}\|_{\frac{1}{\lambda_2}},
\end{align}
where the Schatten norms are defined by $\|f_{X_1}\|_{\frac{1}{\lambda_1}}=\E\big[|f_{X_1}|^{1/\lambda_1}\big]^{\lambda_1}$ and similarly for $\|g_{X_2}\|_{\frac{1}{\lambda_2}}$.
\label{thm:nair}
\end{theorem}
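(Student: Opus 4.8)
The plan is to interpose a ``relative-entropy'' reformulation of $\fR(X_1,X_2)$ between its defining mutual-information description and the claimed functional one, and to show all three coincide. Fix $(\lambda_1,\lambda_2)\in\mathbb R_+^2$; I may assume $\lambda_1,\lambda_2\in[0,1]$, since otherwise both the condition $(\lambda_1,\lambda_2)\in\fR(X_1,X_2)$ (take $U=X_1$ or $U=X_2$ in its definition) and \eqref{eq:fg-norm-2} (take $g\equiv1$ and $f$ non-constant, using $\|f\|_r\le\|f\|_1$ for $0<r<1$ on a probability space) fail. I would show that the following three statements are equivalent: (a) $\lambda_1 I(X_1;U)+\lambda_2 I(X_2;U)\le I(X_1X_2;U)$ for every $p(u\mid x_1,x_2)$; (b) $\lambda_1 D(q_{X_1}\|p_{X_1})+\lambda_2 D(q_{X_2}\|p_{X_2})\le D(q_{X_1X_2}\|p_{X_1X_2})$ for every $q_{X_1X_2}\ll p_{X_1X_2}$ (with $q_{X_1},q_{X_2}$ its marginals); (c) inequality \eqref{eq:fg-norm-2} holds for all $f_{X_1},g_{X_2}$. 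Since (a) is membership in $\fR(X_1,X_2)$ and (c) is the assertion of the theorem, proving (a)$\Leftrightarrow$(b)$\Leftrightarrow$(c) suffices. First I would record the routine reductions for (c): replacing $f,g$ by $|f|,|g|$ and then by $|f|+\delta,|g|+\delta$ with $\delta\downarrow0$ reduces to $f,g>0$, and the substitution $f=e^{\lambda_1 a(X_1)},\,g=e^{\lambda_2 b(X_2)}$ turns \eqref{eq:fg-norm-2}, after taking logarithms, into
\[
(\star)\qquad \log\E_{p_{X_1X_2}}\!\big[e^{\lambda_1 a(X_1)+\lambda_2 b(X_2)}\big]\ \le\ \lambda_1\log\E_{p_{X_1}}\!\big[e^{a}\big]+\lambda_2\log\E_{p_{X_2}}\!\big[e^{b}\big]\qquad\text{for all }a,b.
\]
(The boundary cases $\lambda_i\in\{0,1\}$ are then recovered by continuity in $\lambda_i$ or checked directly.)

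The equivalence (b)$\Leftrightarrow(\star)$ is pure convex (Legendre) duality. By the Gibbs variational formula $\log\E_\mu[e^h]=\sup_\nu\{\E_\nu[h]-D(\nu\|\mu)\}$, applied with $\mu=p_{X_1X_2}$ and $h=\lambda_1 a(X_1)+\lambda_2 b(X_2)$, the left-hand side of $(\star)$ equals $\sup_{q_{X_1X_2}}\{\lambda_1\E_{q_{X_1}}[a]+\lambda_2\E_{q_{X_2}}[b]-D(q_{X_1X_2}\|p_{X_1X_2})\}$. Hence $(\star)$ is equivalent to: for every $q_{X_1X_2}$ and all $a,b$,
\[
\lambda_1\big(\E_{q_{X_1}}[a]-\log\E_{p_{X_1}}[e^{a}]\big)+\lambda_2\big(\E_{q_{X_2}}[b]-\log\E_{p_{X_2}}[e^{b}]\big)\ \le\ D(q_{X_1X_2}\|p_{X_1X_2}).
\]
Because $a$ and $b$ now live in separate summands, I may take the supremum over $a$ and over $b$ independently; since the Legendre conjugate of the log-moment-generating function is relative entropy, these two suprema are exactly $\lambda_1 D(q_{X_1}\|p_{X_1})$ and $\lambda_2 D(q_{X_2}\|p_{X_2})$, which gives (b). On finite alphabets every functional here is finite and continuous on the relevant domain, so these manipulations are legitimate.

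For (a)$\Leftrightarrow$(b): the direction (b)$\Rightarrow$(a) is immediate from $I(X_1X_2;U)=\sum_u p(u)\,D(p_{X_1X_2\mid u}\|p_{X_1X_2})$ and the analogous formulas for $I(X_1;U),I(X_2;U)$ --- apply (b) to $q_{X_1X_2}=p_{X_1X_2\mid u}$ for each $u$ and average over $u$. For (a)$\Rightarrow$(b), given $q\ll p_{X_1X_2}$ I would take a binary $U$ with $\Pr(U{=}1)=t$, $p_{X_1X_2\mid U=1}=q$, and $p_{X_1X_2\mid U=0}=(p_{X_1X_2}-tq)/(1-t)$, which is a valid conditional law for all small $t>0$ (on a finite alphabet the likelihood ratio $dq/dp_{X_1X_2}$ is automatically bounded). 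Writing $\psi(r):=D(r\|p_{X_1X_2})-\lambda_1 D(r_{X_1}\|p_{X_1})-\lambda_2 D(r_{X_2}\|p_{X_2})$, condition (a) for this $U$ says $t\,\psi(q)+(1-t)\,\psi\big((p_{X_1X_2}-tq)/(1-t)\big)\ge0$. Since $\psi(p_{X_1X_2})=0$ and each relative-entropy term has vanishing gradient (on the simplex) at its own ``center'', the perturbation $(p_{X_1X_2}-tq)/(1-t)=p_{X_1X_2}+t(p_{X_1X_2}-q)+O(t^2)$ gives $\psi\big((p_{X_1X_2}-tq)/(1-t)\big)=O(t^2)$; dividing by $t$ and letting $t\downarrow0$ yields $\psi(q)\ge0$, i.e.\ (b).

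I expect the only genuinely delicate point to be the converse (a)$\Rightarrow$(b): one must realize an arbitrary target law $q$ by a legitimate channel $p(u\mid x_1,x_2)$ and then control the $t\downarrow0$ expansion of $\psi$ (in the general, non-finite-alphabet reading one would additionally truncate the likelihood ratio and pass to the limit). Everything else is either a mechanical use of the Donsker--Varadhan/Gibbs duality, for (b)$\Leftrightarrow(\star)$, or a routine normalization and substitution, for \eqref{eq:fg-norm-2}$\Leftrightarrow(\star)$.
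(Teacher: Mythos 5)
The paper does not prove this statement; it is quoted verbatim from the cited reference \cite{Nair}, so there is no in-paper argument to compare against. Your proposal is correct and follows the standard route used in that reference and in \cite{MC-HC}: interposing the relative-entropy form $D(q_{X_1X_2}\|p_{X_1X_2})\ge\lambda_1 D(q_{X_1}\|p_{X_1})+\lambda_2 D(q_{X_2}\|p_{X_2})$, linking it to the norm inequality by the Gibbs/Donsker--Varadhan duality after the substitution $f=e^{\lambda_1 a}$, $g=e^{\lambda_2 b}$, and recovering the auxiliary-variable form via the binary-$U$ perturbation $p_{X_1X_2\mid U=1}=q$, $p_{X_1X_2\mid U=0}=(p_{X_1X_2}-tq)/(1-t)$ with $t\downarrow 0$.
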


The set of pairs $(\lambda_1, \lambda_2)$ satisfying~\eqref{eq:fg-norm-2} is the hypercontractivity ribbon defined in~\cite{AhlswedeGacs}. Hypercontractivity ribbon is known to satisfy the data processing and tensorization. The above theorem gives an alternative characterization of the hypercontractivity ribbon.

Another interesting property of hypercontractivity ribbon is that it characterizes $s^*(X, Y)$ as follows: 
\begin{align}\label{eq:hc-ribbon-s-star}
s^*(X, Y)=\inf_{(\lambda_1, \lambda_2)\in\fR(X, Y)}\frac{1-\lambda_1}{\lambda_2}.
\end{align}
For a proof of this equation see~\cite{MC-HC}.

\begin{example}[Multipartite hypercontractivity ribbon] In Theorem~\ref{defreg1} assume that ($k$ is arbitrary and) $X_{k+1}=(X_1, \dots, X_{k})$. Then $\Upsilon_1(X_{[{k+1}]})$ reduces to 
$$ \fR(X_{[k]})=\big\{\lambda_{[k]}\in \mathbb{R}_{+}^{k}\big|\, \sum_{i=1}^{k}\lambda_i I(X_i;U)\leq I(X_{[k]};U)\big\}.$$
As a result, $\fR(X_{[k]})$ satisfies data processing and tensorization.
\label{ex:M-HC-ribbon} 
\end{example}

Letting $U=X_i$ we observe that if $\lambda_{[k]}\in \fR(X_{[k]})$ then $\lambda_i\leq 1$. Therefore, 
$$\fR(X_{[k]})\subseteq [0,1]^{k}.$$ 
Furthermore, since $\fR(X_{[k]})$ is a special case of regions of the form $\Upsilon^s$, it includes any $\lambda_{[k]}$ satisfying 
$0\leq \lambda_i$ and $\sum_{i=1}^{k}\lambda_i\leq 1$, as argued above. 

The multipartite hypercontractivity ribbon is equal to  $[0,1]^{k}$ if and only if $X_i$ are mutually independent. To prove this note that if $(1,1,\dots, 1)\in  \fR(X_{[k]})$ then by setting $U=X_{[k]}$ we find that $\sum_{i=1}^{k} H(X_i)\leq H(X_{[k]})$. Then by the subadditivity inequality of entropy, $X_i$'s are mutually independent. On the other hand, for mutually  independent variables $X_i$ we have  $\sum_{i=1}^{k} H(X_i)= H(X_{[k]})$  and $\sum_{i=1}^{k} H(X_i|U)\leq H(X_{[k]}|U)$. This shows that $(1,1,\dots, 1)\in  \fR(X_{[k]})$. It is straightforward to generalize Theorem \ref{thm:nair} of \cite{Nair} to show that the multipartite region $\fR(X_{[k]})$ has a characterization in terms of Schatten norms.


\section{Example 3: Fork network with side information}\label{sec:fork}
The fork network with side information is another generalization of the problem we studied in Section \ref{sec:losslesshleper} (see \cite[Problem 16.31]{csiszarbook}, \cite[Theorem 10.4]{elgamal}). The difference of this problem with the one considered in Section \ref{sec:ex2} is that there is only one decoder who needs to recover $X_{[k]}$. The problem is depicted in Figure~\ref{fig:ForkNet}. We denote the capacity region of this problem by $\mathcal R^f(X_1, \dots, X_{k+1})$.

As in Section~\ref{sec:ex2}, define 
\begin{align*}
F^f_{X_{[k+1]}}(\lambda_{[k]}, R_{[k+1]})&=-R_{k+1}-\sum_{i=1}^{k}\lambda_iR_i +\sum_{i=1}^{k}\lambda_iH(X_i),
\\G^f_{X_{[{k+1}]}}(\lambda_{[k]})&=\max_{R_{[{k+1}]}\in\mathcal{R}^f(X_{[{k+1}]})}F^f_{X_{[k]}}(\lambda_{[k]}, R_{[{k+1}]}),
\end{align*}
and
\begin{align*}
\Upsilon^f(X_{[{k+1}]})&=\big\{\lambda_{[k]}\big|\, G^f_{X_{[{k+1}]}}(\lambda_{[k]})\leq 0\big\}.
\end{align*}
As in the previous two sections, $\Upsilon^f(X_{[{k+1}]})$ may only contain  non-negative tuples $\lambda_{[k]}$. Again Theorem~\ref{thm:main} implies that $G^f_{X_{[{k+1}]}}(\lambda_{[k]})$ is additive and the set $\Upsilon^f(X_{[{k+1}]})$ tensorizes. 

We claim that $\Upsilon^f(X_{[{k+1}]})$ also satisfies the data processing property. To show this, we prove that for any $p(x_i'|x_i)$ we have
$$G^f_{X'_{[{k+1}]}}(\lambda_{[k]})\leq G^f_{X_{[{k+1}]}}(\lambda_{[k]}).$$ 
Again we split the proof in two parts. 
When $X'_i$ is a function of $X_i$, the proof is identical to the one given in the Section~\ref{sec:ex2}. It remains to show that $G^f_{A_{[{k+1}]}X_{[{k+1}]}}(\lambda_{[k]})=G^f_{X_{[{k+1}]}}(\lambda_{[k]})$ when $A_1, \dots, A_{k+1}$ are independent of each other and of $X_{[{k+1}]}$.
For this we need to show that
 if 
 $$(R_1, \dots, R_{k+1})\in\mathcal{R}^f(A_1X_1,\dots, A_{k}X_{k}, X_{k+1}),$$ 
 then 
 $$(R_1-H(A_1),  \dots, R_{k}-H(A_{k}), R_{k+1})\in\mathcal{R}^f(X_1,\dots,  X_{k+1}).$$ 
 To prove this last claim, we follow similar ideas as before. We start with sequence of 
 $(n,\epsilon_n, M_1, \dots, M_{k+1})$ codes with asymptotic rate tuple $(R_1, \dots, R_{k+1})\in\mathcal{R}^f(A_1X_1,\dots, A_{k}X_{k}, X_{k+1})$. 
 Take a non-empty subset $S\subseteq[k]$. Letting $S^c=[k]-S$, we have 
\begin{align} 
H(X^n_S|\,M_{k+1}A_{[k]}^nX^n_{S^c})&=H(X^n_SM_S|\, M_{k+1}A_{[k]}^nX^n_{S^c}M_{S^c})\nonumber
\\&=
H(M_S|\,M_{k+1}A_{[k]}^nX^n_{S^c}M_{S^c})+H(X^n_S|\, A_{[k]}^nM_{[{k+1}]}X^n_{S^c})\nonumber
\\&\leq H(M_S|\, M_{k+1}A_{[k]}^nX^n_{S^c}M_{S^c})+o(n)\label{eqn:Fanoz1}
\\&=H(M_S|\, M_{k+1}A_{S^c}^nX^n_{S^c}M_{S^c})-I(A_{S}^n;M_S|\, M_{k+1}A_{S^c}^nX^n_{S^c}M_{S^c})+o(n)\nonumber
\\&\leq H(M_S)-H(A_{S}^n|M_{k+1}A_{S^c}^nX^n_{S^c}M_{S^c})+H(A_{S}^n|A_{S^c}^nX^n_{S^c}M_{[{k+1}]})+o(n)\nonumber
\\&= H(M_S)-H(A_{S}^n)+H(A_{S}^n|A_{S^c}^nX^n_{S^c}M_{[{k+1}]})+o(n)\label{eqn;useinde}
\\&= H(M_S)-H(A_{S}^n)+o(n)\label{eqn:Fanoz2}
\\&\leq \sum_{i\in S}\big(H(M_i)-nH(A_i)\big)+o(n)\label{eqn:indeus23}
\\&=n \Big(\sum_{i\in S}\big(R_i-H(A_i)\big)+o(1)\Big).\label{eqn:finalMSMK}
\end{align}
Here equations~\eqref{eqn:Fanoz1} and~\eqref{eqn:Fanoz2} follow from Fano's inequality; equation \eqref{eqn;useinde} follows from the fact that $A_{S}^n$ is independent of $A_{S^c}^nX^n_{[k+1]}$ and then of $M_{k+1}A_{S^c}^nX^n_{S^c}M_{S^c}$; finally equation \eqref{eqn:indeus23} uses the fact that $A_i$'s are mutually independent.

\begin{figure}
\begin{center}
\includegraphics[width=3.3in]{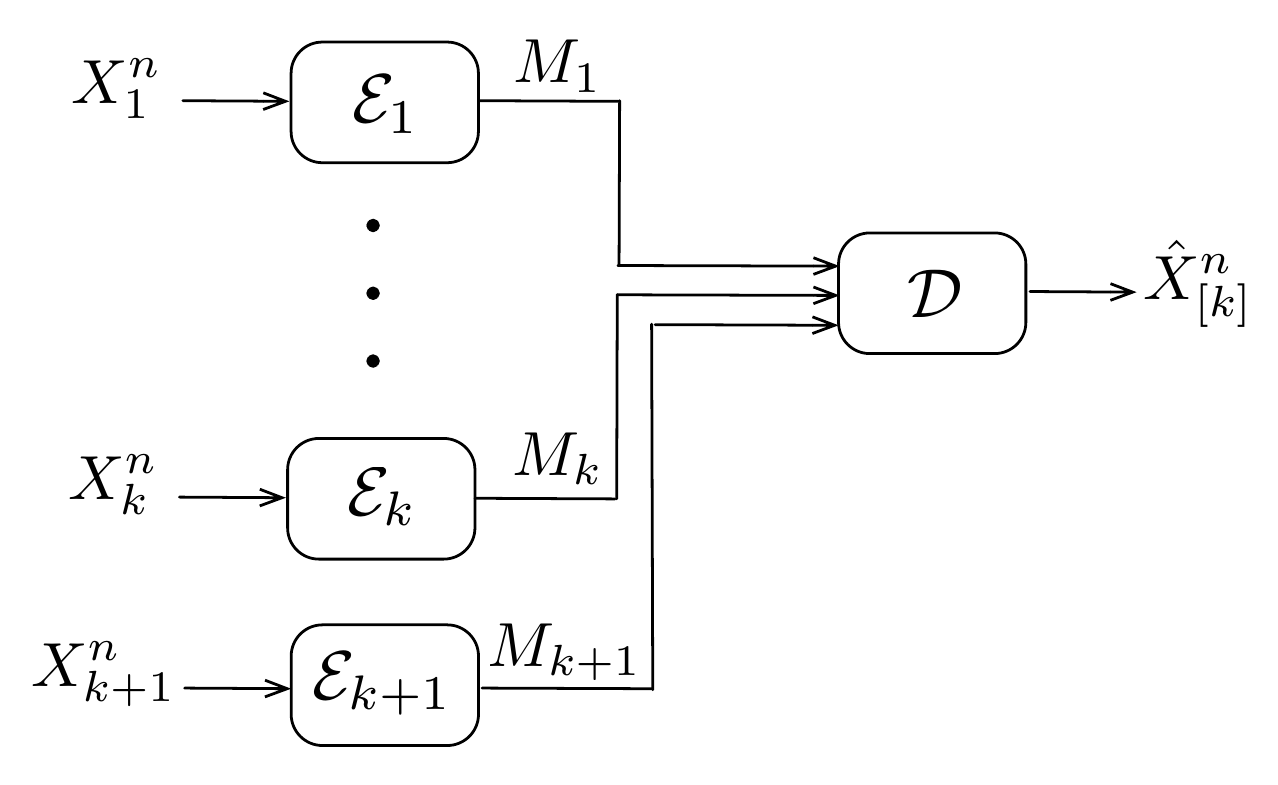}
\caption{Fork network with side information}\label{fig:ForkNet}
\end{center}

\end{figure}

Now we construct a code for inputs $X_{[k+1]}$. We think of $A_{[k]}^n$ as shared randomness given to all the parties. We assume that the encoder $k+1$ creates side information $M_{k+1}$ and sends it to the receiver as before. Then the receiver has side information $M_{k+1}A_{[k]}^n$ and wants to decode $X_{[k]}^n$. To this end, we use the Slepain-Wolf theorem which states that the recovery of $X_{[k]}^n$ is possible if the $i$-th transmitter, for $1\leq i\leq k$, sends information at rate $R'_i$ assuming that for every subset $S\subseteq[k]$ we have
$$\sum_{i\in S}{R'}_i\geq H(X_S^n|\, X_{S^c}^nM_{k+1}A_{[k]}^n),$$
where $S^c=[k]-S$. However, from \eqref{eqn:finalMSMK} we have
$$H(X^n_S|M_{k+1}A_{[k]}^nX^n_{S^c})\leq  n \Big(\sum_{i\in S}\big(R_i-H(A_i)\big)+o(1)\Big).$$
Therefore, if we set ${R'}_i=n(R_i-H(A_i)+o(1))$, the necessary conditions of the Slepain-Wolf theorem with side information at the decoder are satisfied. Thus, we can transmit $N$ repetitions of $X_i^n$ at the average rate of $n(R_i-H(A_i)+o(1))$. This shows that 
$$(R_1-H(A_1), \dots, R_{k}-H(A_{k}), R_{k+1})\in\mathcal{R}^f(X_1,\dots, X_{k+1}).$$
\vspace{0.5cm}

The above discussion implies that $\Upsilon^f(X_{[{k+1}]})$ satisfies data processing and tensorization. 

According to~\cite[Theorem 10.4]{elgamal}, the capacity region $\mathcal R^f(X_{[{k+1}]})$ consists of tuples $R_{[{k+1}]}$ such that  
\begin{align}
R_{k+1}&\geq I(U;X_{k+1}),\\
\sum_{i\in S}R_i&\geq H(X_S|UX_{S^c}),\quad \forall S\subset [k],\label{eq:gw-3f}
\end{align}
for some $U-X_{k+1}-X_{[k]}$. 

Let us consider the special case $k=2$. Then, the rate region is described by
\begin{align*}
R_3&\geq I(U;X_{3}),\\
R_1&\geq H(X_1|UX_{2}),\\
R_2&\geq H(X_2|UX_{1}),\\
R_1+R_2&\geq H(X_1X_2|U).
\end{align*}
The corner points of this region are 
$$(R_1, R_2, R_3)=\big(H(X_1|X_2U), H(X_2|U), I(U;X_3)\big),$$
and 
$$(R_1, R_2, R_3)=\big(H(X_1|U), H(X_2|X_1U), I(U;X_3)\big).$$
Since $G^f_{X_{[k+1]}}(\lambda_{[k]})$ involves maximization of a linear function, its maximum occurs at one of these corner points. Then one can verify that for non-negative values of $\lambda_1$ and $\lambda_2$ we have
$$G^f_{X_{[3]}}(\lambda_{1}, \lambda_2)=\max_{U-X_3-X_1X_2}-I(X_3;U)+\lambda_1 I(X_1;U)+\lambda_2 I(X_2;U)+\max\{\lambda_1, \lambda_2\}I(X_1;X_2|U).$$
Hence, we have the following theorem.

\begin{theorem} 
The following region satisfies data processing and tensorization:
\begin{align*}
\Upsilon^f(X_1, X_2, X_3)=\big\{(\lambda_1, \lambda_2)\in \mathbb{R}_{+}^2|\, &\lambda_1 I(X_1;U)+\lambda_2 I(X_2;U)\\&+\max\{\lambda_1, \lambda_2\}I(X_1;X_2|U)\leq I(X_3;U),\quad \forall p(u|x_3)\big\}.
\end{align*}
\end{theorem}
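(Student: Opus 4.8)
The plan is to apply the general machinery of Theorem~\ref{thm:main} together with the operational data-processing argument already carried out for the fork network, and then specialize the capacity-region formula to $k=2$ to extract the explicit inequality. The tensorization half is immediate: $\Upsilon^f(X_1,X_2,X_3)$ is exactly the region $\Upsilon(X_{[k]})$ of Theorem~\ref{thm:main} with $k=2$, $X_3$ playing the role of the helper source, so \eqref{tensequ} gives $\Upsilon^f(X_1^n,X_2^n,X_3^n)=\Upsilon^f(X_1,X_2,X_3)$ for all $n$. For data processing, the argument in this section already establishes $G^f_{X'_{[3]}}(\lambda_1,\lambda_2)\le G^f_{X_{[3]}}(\lambda_1,\lambda_2)$ for all local stochastic maps $p(x'_i\mid x_i)$ — via the functional representation lemma, the ``function of'' case inherited from Section~\ref{sec:ex2}, and the private-randomness case handled by the chain of inequalities \eqref{eqn:Fanoz1}--\eqref{eqn:finalMSMK} plus the Slepian--Wolf-with-side-information step. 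Hence $\Upsilon^f(X_1,X_2,X_3)\subseteq\Upsilon^f(X'_1,X'_2,X'_3)$, which is the data processing inequality.

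It remains to verify that the abstractly defined $\Upsilon^f(X_1,X_2,X_3)$ coincides with the explicit set in the statement. First I would invoke \cite[Theorem 10.4]{elgamal} to get the $k=2$ rate region: $R_3\ge I(U;X_3)$, $R_1\ge H(X_1\mid UX_2)$, $R_2\ge H(X_2\mid UX_1)$, and $R_1+R_2\ge H(X_1X_2\mid U)$ for some $U-X_3-X_1X_2$. Since $F^f$ is linear in $(R_1,R_2,R_3)$ with coefficients $(-\lambda_1,-\lambda_2,-1)$ and $\lambda_1,\lambda_2\ge 0$, the maximum over the (polyhedral, upward-closed-in-the-bad-directions) rate region for a fixed $U$ is attained at a corner of the two-dimensional polytope cut out in the $(R_1,R_2)$ plane by the three lower bounds — namely one of the two listed corner points $(H(X_1\mid X_2U),H(X_2\mid U))$ or $(H(X_1\mid U),H(X_2\mid X_1U))$. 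Evaluating $F^f$ at each and taking the larger, using $H(X_2\mid U)-H(X_2\mid X_1U)=I(X_1;X_2\mid U)=H(X_1\mid U)-H(X_1\mid X_2U)$, yields
$$\lambda_1 I(X_1;U)+\lambda_2 I(X_2;U)+\max\{\lambda_1,\lambda_2\}\,I(X_1;X_2\mid U)-I(X_3;U),$$
after collecting the $\theta_S H(X_S)$ terms from \eqref{def:F}; maximizing over $U-X_3-X_1X_2$ gives the claimed formula for $G^f_{X_{[3]}}(\lambda_1,\lambda_2)$, and then $\Upsilon^f$ is precisely the set where this is $\le 0$, i.e. where the displayed inequality holds for all $p(u\mid x_3)$. (For $\lambda_i$ not both nonnegative one checks, as in the previous sections, that $G^f\ge 0$ by taking $R_i\to\infty$, so such tuples are excluded, consistent with the $\mathbb{R}_+^2$ constraint.)

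The only genuinely delicate point is the corner-point reduction: one must be sure that among all rate tuples achieving a given $U$, the maximum of the linear objective $-\lambda_1 R_1-\lambda_2 R_2-R_3$ is attained at one of the two stated vertices rather than at infinity or on an unbounded edge. Since we are maximizing a linear functional with strictly negative coefficients over the upward-closed polyhedron $\{R_3\ge I(U;X_3)\}\times\{R_1\ge H(X_1\mid UX_2),\,R_2\ge H(X_2\mid UX_1),\,R_1+R_2\ge H(X_1X_2\mid U)\}$, the optimum sits at the "lower-left" boundary: $R_3=I(U;X_3)$ and $(R_1,R_2)$ at a vertex of the base polygon. That base polygon has exactly the two vertices listed (its only other extreme structure is the two unbounded rays, along which the objective strictly decreases), so one of the two corners is optimal, with the tie-breaking $\max\{\lambda_1,\lambda_2\}$ coefficient emerging exactly because which corner wins depends on the sign of $\lambda_1-\lambda_2$. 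I would write this out carefully; everything else is bookkeeping with entropy identities.
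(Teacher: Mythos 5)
Your proposal is correct and follows essentially the same route as the paper: tensorization from Theorem~\ref{thm:main}, data processing via the operational (functional representation lemma plus Fano/Slepian--Wolf) argument of this section, and the explicit formula for $G^f_{X_{[3]}}$ by evaluating the linear objective at the two corner points of the $k=2$ rate region. Your extra care in verifying that the optimum of $-\lambda_1R_1-\lambda_2R_2-R_3$ over the unbounded polyhedron sits at one of the two listed vertices (and that the sum constraint is the active one) is exactly the detail the paper leaves as ``one can verify,'' and it checks out.
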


Note that the above region differs from the hypercontractivity ribbon as it includes the term $\max\{\lambda_1, \lambda_2\}I(X_1;X_2|U)$.

By setting $U$ to be a constant random variable, we observe that $\Upsilon^f(X_1, X_2, X_3)=\{(0,0)\}$ if $I(X_1;X_2)>0$. Therefore, to get a non-trivial region one must have $I(X_1;X_2)=0$. 
Assuming this and using the expansion $I(X_1;X_2|U)-I(X_1;X_2)=-I(X_1;U)-I(X_2;U)+I(X_1X_2;U)$, we observe that $\Upsilon^f(X_1, X_2, X_3)$ has the following alternative characterization (when $I(X_1; X_2)=0$):
\begin{align}
\Upsilon^f(X_1, X_2, X_3)=\big\{(\lambda_1, \lambda_2)\in \mathbb{R}_{+}^2|\, &\min\{0, \lambda_1-\lambda_2\} I(X_1;U)+\min\{0, \lambda_2-\lambda_1\} I(X_2;U)\nonumber\\&+\max\{\lambda_1, \lambda_2\}I(X_1X_2;U)\leq I(X_3;U),\quad \forall p(u|x_3)\big\}.\label{eqn:alternativeUpsilon3}
\end{align}
The above expression allows for an explicit characterization of the set of pairs $(\lambda, \lambda)\in \Upsilon_2(X_1, X_2, X_3)$. Indeed, $(\lambda, \lambda)$ is in $\Upsilon(X_1, X_2, X_3)$ if and only if
$$\frac{1}{\lambda}\geq \max_{U-X_3-X_1X_2}\frac{I(X_1X_2;U)}{I(X_3;U)}=s^*(X_3;X_1X_2).$$


\section{Conditional tensorization}\label{sec:cond:HC}
Consider the source coding problem of Section \ref{sec:ex2}. Let us provide all of the parties (encoders and decoders) with i.i.d.\ repetitions of some random variable $Z$, which is jointly distributed with $X_{[{k+1}]}$.  This is similar to the idea of Coded Time Sharing \cite[Sec. 4.5.3]{elgamal}. Then one can see that the capacity region $\mathcal R^c(X_1,\dots, X_{k+1}, Z)$ for this problem is equal to the one given in equations~\eqref{eq:gw-1f} and~\eqref{eq:gw-2f}, except that everything gets conditioned on $Z$:
\begin{align}
R_{k+1}&\geq I(U;X_{k+1}|Z),\label{eq:gw-1zf}\\
R_i&\geq H(X_i|UZ),\quad \forall i\in[k],\label{eq:gw-2zf}
\end{align}
for some $U-X_{k+1}Z-X_{[k]}$. This region results in the following region $\Upsilon^c(X_1, \dots, X_{k+1}, Z)$ consisting of all non-negative $\lambda_i$ such that 
$$\sum_{i=1}^{k}\lambda_i I(X_i;U|Z)\leq I(X_{k+1};U|Z),$$
for all $p(u|zx_{k+1})$.

Let us consider the special case of $k=2$, $X_3=(X_1, X_2)$:
\begin{theorem}[Conditional bipartite hypercontractivity ribbon] 
Let
$$ \fR(X_1, X_2|Z)=\{(\lambda_1, \lambda_2)\in \mathbb{R}_{+}^2|\, \lambda_1 I(X_1;U|Z)+\lambda_2 I(X_2;U|Z)\leq I(X_1X_2;U|Z), \quad \forall U\}.$$
Then we have
\begin{itemize}
\item Tensorization: 
$\fR(X_1, X_2|Z)=\fR(X^n_1, X^n_2|Z^n)$ if $(X^n_1, X^n_2,Z^n)$ is $n$ i.i.d.\ repetitions of $(X_1, X_2, Z)$.
\item Data processing: 
$\fR(X_1, X_2|Z)\subseteq \fR(X'_1, X'_2|Z)$ for any $p(x'_1|x_1z)$ and $p(x'_2|x_2z)$. 
\end{itemize}
\end{theorem}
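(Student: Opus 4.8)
The plan is to recognize $\fR(X_1,X_2|Z)$ as the dual region $\Upsilon^c$ of the conditional one side-information source coding problem of this section, specialized to $k=2$ and $X_3=(X_1,X_2)$, and then to derive tensorization from (the proof of) Theorem~\ref{thm:main} and data processing by re-running the operational arguments of Sections~\ref{sec:losslesshleper} and~\ref{sec:ex2} with $Z$ carried along as a conditioning variable throughout. Concretely, set $F^c_{X_{[3]},Z}(\lambda_1,\lambda_2,R_{[3]})=-R_3-\lambda_1R_1-\lambda_2R_2+\lambda_1 H(X_1|Z)+\lambda_2 H(X_2|Z)$, let $G^c_{X_{[3]},Z}(\lambda_1,\lambda_2)$ be its maximum over $\mathcal{R}^c(X_1,X_2,(X_1,X_2),Z)$, and let $\Upsilon^c=\{(\lambda_1,\lambda_2):G^c\le 0\}$. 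Taking $R_3=I(U;X_1X_2|Z)$, $R_i=H(X_i|UZ)$ in the characterization~\eqref{eq:gw-1zf}--\eqref{eq:gw-2zf} (with $X_3=(X_1,X_2)$ there is no joint rate constraint, exactly as in the Gray--Wyner region), one checks that $\Upsilon^c$ coincides with $\fR(X_1,X_2|Z)$.

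\textbf{Tensorization.} The rate region $\mathcal{R}^c$ is additive in the sense of~\eqref{addveq} directly from its operational definition: a length-$nN$ code is obtained by $N$-fold repetition of a length-$n$ code, and $Z^{nN}$ is just $N$ independent copies of $Z^n$ handed to all terminals. The argument in the proof of Theorem~\ref{thm:main} then applies verbatim; the fact that the coefficients of the entropy terms are the free variables $\lambda_1,\lambda_2$ is immaterial, since $F^c$ still satisfies $F^c_{X^n_{[3]},Z^n}(\lambda_1,\lambda_2,nR_{[3]})=nF^c_{X_{[3]},Z}(\lambda_1,\lambda_2,R_{[3]})$ and conditional entropy $H(X_i|Z)$ is an additive function (so the remark following Theorem~\ref{thm:main} covers it). Hence $G^c_{X^n_{[3]},Z^n}=n\,G^c_{X_{[3]},Z}$ and therefore $\fR(X^n_1,X^n_2|Z^n)=\fR(X_1,X_2|Z)$.

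\textbf{Data processing.} It suffices to prove $G^c_{X'_1,X'_2,(X'_1,X'_2),Z}(\lambda_1,\lambda_2)\le G^c_{X_1,X_2,(X_1,X_2),Z}(\lambda_1,\lambda_2)$ for all $(\lambda_1,\lambda_2)$. As in Section~\ref{sec:ex2} we apply the functional representation lemma conditionally on $Z$, writing $X'_i=f_i(X_i,Z,A_i)$ with $A_1,A_2$ independent of each other and of $(X_1,X_2,Z)$, and split into two steps. In \emph{Step I} (deterministic case), if $X'_i$ is a function of $(X_i,Z)$ then $X'_3=(X'_1,X'_2)$ is a function of $((X_1,X_2),Z)$, and it suffices to show that $R'_{[3]}\in\mathcal{R}^c(X'_1,X'_2,(X'_1,X'_2),Z)$ implies $(R'_1+H(X_1|X'_1Z),R'_2+H(X_2|X'_2Z),R'_3)\in\mathcal{R}^c(X_1,X_2,(X_1,X_2),Z)$: each encoder and the helper first compute their primed data from their observation and $Z^n$, run the given code, and then the $i$-th encoder runs Slepian--Wolf with decoder side information to upgrade $X'^n_i$ to $X^n_i$ at an extra $H(X_i|X'_iZ)$ bits per symbol; the identity $H(X_i|Z)-H(X_i|X'_iZ)=H(X'_i|Z)$ makes the two $G^c$ expressions match. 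In \emph{Step II} we show $G^c_{A_1X_1,A_2X_2,(A_1X_1,A_2X_2),Z}=G^c_{X_1,X_2,(X_1,X_2),Z}$: one direction is Step I, and the other is the exact analogue of ``Proof of II'' in Section~\ref{sec:losslesshleper} with every entropy conditioned additionally on $Z$, treating $A^n_1,A^n_2$ as common randomness (which leaves $\mathcal{R}^c$ unchanged) and using Fano's inequality together with the Slepian--Wolf theorem to conclude $(R_1-H(A_1),R_2-H(A_2),R_3)\in\mathcal{R}^c(X_1,X_2,(X_1,X_2),Z)$. Composing the two steps along $X'_i=f_i(A_iX_i,Z)$ yields $\fR(X_1,X_2|Z)=\fR(A_1X_1,A_2X_2|Z)\subseteq\fR(X'_1,X'_2|Z)$.

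The one point requiring care — and essentially the only difference from the unconditional Gray--Wyner case — is that in $\fR(X'_1,X'_2|Z)$ the helper observes the \emph{joint} variable $(X'_1,X'_2)$, so in every reduction the helper must reproduce exactly the primed (or augmented) variables produced by encoders $1$ and $2$. Since the maps $p(x'_i|x_iz)$ may be randomized, this is precisely what forces the private randomness $A_1,A_2$ to be distributed to all terminals as common randomness; invoking the standard fact that common randomness among all parties does not change the rate region $\mathcal{R}^c$ closes this gap, after which the remainder is a routine transcription of the arguments of Sections~\ref{sec:losslesshleper} and~\ref{sec:ex2}.
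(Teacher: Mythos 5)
Your proof is correct, but it is not the route the paper actually writes out. The paper only remarks that the theorem ``can be operationally proved as before'' and then gives its self-contained argument via Lemma~\ref{lemma9}: since one may take $U$ constant outside any chosen $z^*$, the conditional ribbon decomposes as $\fR(X_1,X_2|Z)=\bigcap_{z:p(z)>0}\fR(X_1,X_2|Z=z)$, after which tensorization follows because conditioning $(X_1^n,X_2^n)$ on $Z^n=z^n$ yields a product of the (generally non-identical) distributions $p(x_1,x_2|z_i)$ — so one invokes the strong tensorization~\eqref{eq:def-tens-region} of the unconditional ribbon — and data processing follows by applying the unconditional data processing inequality separately for each $z$. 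What you have done instead is to flesh out the operational route in full: identifying $\fR(X_1,X_2|Z)$ as the dual $\Upsilon^c$ of the $Z$-conditioned one side-information problem, getting additivity of $G^c$ from Theorem~\ref{thm:main} (correctly noting that the remark after that theorem covers the additive function $H(X_i|Z)$ with free coefficients), and redoing the functional-representation/Slepian--Wolf/Fano argument of Sections~\ref{sec:losslesshleper} and~\ref{sec:ex2} with $Z$ as universal side information. Your flagged subtlety — that the helper observes the joint $(X_1',X_2')$ and so must reproduce the encoders' randomized outputs, which is exactly what the common-randomness step supplies — is real and correctly resolved; it is implicitly present already in the paper's unconditional Gray--Wyner specialization. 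The trade-off: the paper's Lemma~\ref{lemma9} route is much shorter but leans on the previously established strong tensorization and data processing of the unconditional ribbon, whereas your route is self-contained, needs only weak (i.i.d.) additivity of the rate region, and additionally delivers the additivity of $G^c$ itself, which the intersection argument does not directly give.
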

The above properties of the conditional hypercontractivity ribbon can be operationally proved as before. Alternatively we have the following characterization of the conditional hypercontractivity ribbon from which the above theorem is implied. 

\begin{lemma}\label{lemma9} We have
\begin{align}\fR(X_1, X_2|Z)=\bigcap_{z:p(z)>0}\fR(X_1, X_2|Z=z).\label{eqn:eq-form-HC}\end{align}
\end{lemma}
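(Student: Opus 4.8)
The plan is to prove the two inclusions in \eqref{eqn:eq-form-HC} separately, using the fact that the conditional mutual informations appearing in the definition of $\fR(X_1,X_2|Z)$ decompose as averages over $z$: for any test channel $p(u|zx_3)$ (where $x_3 = (x_1,x_2)$) we have
\begin{align*}
I(X_1;U|Z) &= \sum_{z:p(z)>0} p(z)\, I(X_1;U|Z=z),
\end{align*}
and similarly for $I(X_2;U|Z)$ and $I(X_1X_2;U|Z)$. The key observation is that fixing $Z=z$ turns $p(u|zx_3)$ into an admissible test channel $p_z(u|x_3)$ for the unconditional ribbon $\fR(X_1,X_2|Z=z)$, and conversely a family $\{p_z(u|x_3)\}_z$ of such channels assembles into a single conditional test channel.

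For the inclusion $\fR(X_1,X_2|Z) \subseteq \bigcap_z \fR(X_1,X_2|Z=z)$, I would argue the contrapositive: suppose $(\lambda_1,\lambda_2) \notin \fR(X_1,X_2|Z=z_0)$ for some $z_0$ with $p(z_0)>0$, witnessed by a channel $p_{z_0}(u|x_3)$ violating the defining inequality. Build a conditional test channel $p(u|zx_3)$ that equals $p_{z_0}(u|x_3)$ when $z=z_0$ and is (say) a constant map when $z\neq z_0$; for the constant choices the per-$z$ contribution to every term is zero, so the weighted-average inequality for $\fR(X_1,X_2|Z)$ reduces to $p(z_0)$ times the violated inequality at $z_0$, hence is also violated. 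Thus $(\lambda_1,\lambda_2)\notin \fR(X_1,X_2|Z)$.

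For the reverse inclusion $\bigcap_z \fR(X_1,X_2|Z=z) \subseteq \fR(X_1,X_2|Z)$, take $(\lambda_1,\lambda_2)$ in every $\fR(X_1,X_2|Z=z)$ and an arbitrary conditional test channel $p(u|zx_3)$. For each $z$ the restricted channel $p(\cdot|z,\cdot)$ is admissible for $\fR(X_1,X_2|Z=z)$, so
\begin{align*}
\lambda_1 I(X_1;U|Z=z) + \lambda_2 I(X_2;U|Z=z) \leq I(X_1X_2;U|Z=z).
\end{align*}
Multiplying by $p(z)\geq 0$ and summing over $z$ with $p(z)>0$ gives exactly the defining inequality of $\fR(X_1,X_2|Z)$ for the channel $p(u|zx_3)$. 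Since the channel was arbitrary, $(\lambda_1,\lambda_2)\in\fR(X_1,X_2|Z)$.

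I do not expect a serious obstacle here; the only points requiring a little care are (i) checking that the "constant map" pieces used to pad the test channel in the first inclusion genuinely contribute zero to each mutual-information term (immediate, since $U$ is then independent of everything), and (ii) making sure the decomposition of conditional mutual information into a $p(z)$-weighted sum of unconditional ones is applied to the correct Markov structure $U - (X_3,Z) - X_{[2]}$, which it is since conditioning on $Z=z$ preserves the Markov chain. Everything else is a direct manipulation of the averaging identity.
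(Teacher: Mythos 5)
Your proposal is correct and matches the paper's own proof: one direction by averaging the per-$z$ inequalities with weights $p(z)$, and the other by extending a violating test channel at $z_0$ with constant $U$ for $z\neq z_0$ so that only the $z_0$ term survives. The extra care you note about the Markov structure and the vanishing contributions of the constant pieces is sound but does not change the argument.
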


\begin{proof}
It suffices to show that 
\begin{align}
\lambda_1 I(X_1;U|Z)+\lambda_2 I(X_2;U|Z)\leq I(X_1X_2;U|Z), \qquad \forall U
\label{eqnZ1}\end{align}
if and only if 
\begin{align}\lambda_1 I(X_1;U|Z=z)+\lambda_2 I(X_2;U|Z=z)\leq I(X_1X_2;U|Z=z), \qquad \forall U
\label{eqnZ2}\end{align}
for all $z$ with $p(z)>0$. Clearly, equation~\eqref{eqnZ2} implies \eqref{eqnZ1}. 
To see the converse, given any arbitrary $z^*$, observe that we can choose $U$ to be a constant if $z\neq  z^*$. 
\end{proof}

One can similarly define conditional $s^*(X_1,X_2|Z)$ either using~\eqref{eq:hc-ribbon-s-star} as
$$s^*(X_1, X_2|Z)=\inf_{(\lambda_1, \lambda_2)\in\fR(X_1, X_2|Z)}\frac{1-\lambda_1}{\lambda_2},$$
or directly from the source coding problem of Section~\ref{sec:losslesshleper} as
$$s^*(X_1, X_2|Z)=\max_{U-ZX_1-X_2}\frac{I(X_2, U|Z)}{I(X_1, U|Z)}=\max_{z~\text{where}~p(z)>0}s^*(X_1, X_2|Z=z).$$
These two definition coincide as can be verified using their equivalency in the unconditional case. 
Moreover, they match with the definition of $s^*_{Z}(X_1Z, X_2Z)$ given in~\cite{CuffInitial}. In Appendix \ref{conditionalRhoSstar} we study the relation between conditional $s^*$ and conditional maximal correlation.

Conditional hypercontractivity ribbon is  useful in studying tensorization for two-way channels, as recently shown by authors in~\cite{OurPaper}. We briefly discuss this in Section~\ref{sec:two-way}. Also, an application  of conditional hypercontractivity ribbon for secure distribution simulation is given in Appendix~\ref{appndx:secsim}.

\section{Computation of the regions and their local perturbation}\label{sec:computing}

Explicit computation of the tensorizing regions defined so far for a given joint distribution can be computationally cumbersome, specially for distributions defined on large alphabet sets. This computation can be relatively simplified if one observes that expressions with auxiliary random variables generally have alternative representations in terms of lower convex envelopes\footnote{A lower convex envelope of a function is the largest convex function that lies below the function.} (see e.g., \cite{ChandraEnvelope}). Consider for instance 
$$s^*(X,Y)=\sup_{U-X-Y}\frac{I(U;Y)}{I(U;X)}.$$
A representation of this quantity in terms of lower convex envelopes is given in~\cite{MC-HC}.
Indeed, $s^*(X,Y)$ can be written as the minimum value of $\lambda$ such that
\begin{align}H(Y)-\lambda H(X)\leq \min_{U: U-X-Y}\big[H(Y|U)-\lambda H(X|U)\big].
\label{eqn:sstareho1}
\end{align}
The right hand side of this equation has a representation in terms of the lower convex envelope operator as follows. Given $p(x, y)=p(x)p(y|x)$, we fix the channel $p(y|x)$ and vary the input distribution to define the following function 
$$t_{\lambda}(q(x))=H(Y)-\lambda H(X),$$
where entropies are computed with respect to $q(x,y)=q(x)p(y|x)$. Then 
$$\min_{U:U-X-Y}\big[H(Y|U)-\lambda H(X|U)\big],$$ 
is the lower convex envelope of the function
$t_{\lambda}(q(x))$ at $q(x)=p(x)$. Equation~\eqref{eqn:sstareho1} then implies that $s^*(X, Y)$ is the minimum value of
$\lambda$ such that the function $t_{\lambda}(q(x))$ touches its lower convex envelope at $p(x)$. 

The lower convex envelope operator is still a global operator. In order to further simplify the computation, one can 
replace lower convex envelopes with the weaker constraint of local convexity, i.e., to consider the minimum value of
$\lambda$ such that the function $t_{\lambda}(q(x))$ is locally convex (has a positive semi-definite Hessian) at $p(x)$. This quantity is clearly a lower bound on $s^*(X,Y)$, and is shown in~\cite{MC-HC} to be equal to $\rho(X, Y)^2$, where $\rho(X, Y)$ is the maximal correlation between $X$ and $Y$. The quantity $\rho(X, Y)$ has an efficient representation in terms of principal inertia components (see~\cite[Sec. II. B]{medard} and references therein). As discussed in the introduction it also satisfies the tensorization and data processing properties.

More generally, in~\cite{OurPaper} the local approximation of the bipartite hypercontractivity ribbon is derived and the \emph{maximal correlation ribbon} is defined. It is shown that this ribbon satisfies tensorization and data processing. One can apply this idea of local approximation to other regions defined in this paper. Here we do this for the region given in Section \ref{sec:ex2}.

In Section \ref{sec:ex2}, it was shown that  
\begin{align}G^s_{X_{[{k+1}]}}(\lambda_{[k]})=\max_{U-X_{k+1}-X_{[k]}}-I(X_{k+1};U)+\sum_{i=1}^{k}\lambda_i I(X_i;U),\label{defsecG}
\end{align}
is additive and satisfies the data processing inequality. This function can be written as
\begin{align*}
G^s_{X_{[{k+1}]}}(\lambda_{[k]})&=-H(X_{k+1})+\sum_{i=1}^{k}\lambda_i H(X_i)+\max_{U-X_{k+1}-X_{[k]}}\bigg[H(X_{k+1}|U)-\sum_{i=1}^{k}\lambda_i H(X_i|U)\bigg]\\&=-H(X_{k+1})+\sum_{i=1}^{k}\lambda_i H(X_i)-\min_{U-X_{k+1}-X_{[k]}}\bigg[-H(X_k|U)+\sum_{i=1}^{k}\lambda_i H(X_i|U)\bigg].
\end{align*}
This function is less than or equal to zero if and only if for any $p(u|x_{k+1})$ we have
$$-H(X_{k+1}|U)+\sum_{i=1}^{k}\lambda_i H(X_i|U)\geq -H(X_{k+1})+\sum_{i=1}^{k}\lambda_i H(X_i).$$
In other words, we have $\lambda_{[k]}\in \Upsilon^s(X_{[{k+1}]})$ if
the function 
\begin{align}\label{eq:map-q-h-1}
t_{\lambda_{[k]}}\big(q(x_{k+1})\big)= -H(X_{k+1})+\sum_{i=1}^{k}\lambda_i H(X_i),
\end{align}
when we fix $p(x_{[k]}|x_{k+1})$ and vary the marginal distribution of $X_{k+1}$, lies on its lower convex envelope at $q(x_{k+1})=p(x_{k+1})$. 

Now, instead of being on the lower convex envelope, we look at the local convexity of $t_{\lambda_{[k]}}(\cdot)$ at $q(x_{k+1})=p(x_{k+1})$. Local convexity is a necessary condition for being on the lower convex envelope. To verify local convexity, consider a local perturbation of the form $q_\epsilon(x_{k+1})=p(x_{k+1})(1+\epsilon f(x_{k+1}))$. Assuming that $\mathbb{E}[f(X_{k+1})]=0$, then for sufficiently small $|\epsilon|$, this equation defines a valid distribution. Then we may consider the distribution $q_{\epsilon}(x_{[k+1]})=q_\epsilon(x_{k+1})p(x_{[k]}|x_{k+1})$. The second derivative of~\eqref{eq:map-q-h-1} with respect to $\epsilon$ at $\epsilon=0$ is equal to~\cite{GMartonPaper}
$$\frac{\partial^2}{\partial \epsilon^2} t_{\lambda_{[k]}}\big(q_\epsilon(x_{[k+1]})\big) \Big|_{\epsilon=0}=\mathbb{E}[f(X_{k+1})^2]-\sum_{i=1}^{k}\lambda_i\mathbb{E}[\mathbb{E}[f(X_{k+1})^2|X_i]].$$
We would like this to be non-negative for all valid perturbations $f$. Then we obtain the following new region.

\begin{definition} \label{def10}
Define 
\begin{align*}
\Lambda^s(X_{[{k+1}]})&=\big\{\lambda_{[k]}\in \mathbb{R}_{+}^{k}\big|\, \mathbb{E}[f(X_{k+1})^2]\geq \sum_{i=1}^{k}\lambda_i\mathbb{E}[\mathbb{E}[f(X_{k+1})^2|X_i]], \forall f(X_{k+1}): \mathbb{E}[f(X_{k+1})]=0\big\}
\\&=\big\{\lambda_{[k]}\in \mathbb{R}_{+}^{k}\big|\, {\Var}[f(X_{k+1})]\geq \sum_{i=1}^{k}\lambda_i{\Var}_{X_i}[\mathbb{E}_{X_{k+1}|X_i}[f(X_{k+1})]],\quad \forall f(X_{k+1})\big\}.
\end{align*}
\end{definition}

The region $\Lambda^s(X_{[{k+1}]})$ again satisfies data processing and tensorization. To prove this we define the following function
\begin{align}
\widetilde{G}^s_{X_{[k+1]}}(\lambda_{[k]})&=\max_{f(X_{k+1})}\bigg[-{\Var}[f(X_{k+1})]+\sum_{i=1}^{k}\lambda_i{\Var}_{X_i}[\mathbb{E}_{X_{k+1}|X_i}[f(X_{k+1})]]\bigg].
\label{defsecG2}
\end{align}
Then the data processing and tensorization of $\Lambda^s(X_{[{k+1}]})$ is equivalent to the data processing and additivity of $\widetilde{G}^s_{X_{[k+1]}}(\lambda_{[k]})$.

Comparing equations~\eqref{defsecG} and~\eqref{defsecG2}, we see that the term $I(U;X_i)$ is replaced with 
$${\Var}_{X_i}[\mathbb{E}_{X_{k+1}|X_i}[f(X_{k+1})]].$$
This suggests that an \emph{algebraic} proof of additivity and data processing of $G^s_{X_{[{k+1}]}}$ can be mimicked to obtain a proof of these properties for $\widetilde{G}^s_{X_{[{k+1}]}}$. 
Indeed using Table~\ref{table:similarities}, we may transform any algebraic relation between quantities in terms of  mutual information, to a similar equation in terms of variance. 
In particular, the chain rule for mutual information corresponds to the law of total variance. The fourth property $I(U;C|DE)\geq I(U;C|D)$ holds for mutual information 
 since $I(U;C|DE)=I(UE;C|D)\geq I(U;C|D)$. The proof of its analogue for variance is similar and can be found in~\cite[Lemma 30]{OurPaper}. 
 Using these properties, we show in Appendix~\ref{appdx:localpertub} that a proof of additivity and data processing for $G^s_{X_{[k+1]}}$ gives a similar proof for $\widetilde{G}^s_{X_{[k+1]}}$. For another proof of this type, see the proofs of the data processing and tensorization properties of hypercontractivity ribbon and maximal correlation ribbon in~\cite{OurPaper}.

\begin{table}
\begin{center}
    \begin{tabular}{|c | c | c|}
    \hline
    &\textbf{Mutual Information} & \textbf{Variance} \\ \hline
   1& $I(U;B)$ with $U-A-B$ & ${\Var}_{B}[\mathbb{E}_{A|B}[f(A)]]$ \\ \hline
2&$I(U;C|B)$ with $U-A-BC$ & $\mathbb{E}_{B}{\Var}_{C|B}[\mathbb{E}_{A|BC}[f(A)]]$ \\ \hline
 &   Chain rule &  Law of total variance\\ 
 3&$I(U;BC)=I(U;B)+I(U;C|B)$ &${\Var}_{BC}[\mathbb{E}_{A|BC}[f(A)]]={\Var}_{B}[\mathbb{E}_{A|B}[f(A)]]\quad\qquad\qquad$ \\
&&$\qquad\qquad\qquad \qquad\quad +\mathbb{E}_{B}{\Var}_{C|B}[\mathbb{E}_{A|BC}[f(A)]]$\\
\hline
4&$I(U;C|DE)\geq I(U;C|D)$ & $\E_{DE}\Var_{C|DE}\E_{A|CDE}[f(A)] \geq \E_D \Var_{C|D} \E_{A|CD}[f(A)]$ 
\\ &if $C-D-E,~~U-A-CDE$ & if $C-D-E$
\\
\hline
    \end{tabular}
\end{center}
\label{table:similarities}\caption{Algebraic similarities between mutual information and variance}
\end{table}


\section{Two-way channels}\label{sec:two-way}
So far we have only considered source coding problems. We now consider a two-way channel coding problem. Let us begin by motivating our problem. Let $p(y|x)$ and $q(\tilde{y}|\tilde{x})$ be two point-to-point channels. The question is whether we can simulate one use (copy) of the channel $q(\tilde{y}|\tilde{x})$ from arbitrarily many uses of $p(y|x)$.  In other words,  given some arbitrary small error $\epsilon>0$, can we find some $n$ and (possibly randomized) encoder $\mathcal{E}:\tilde{x}\mapsto x^n$ and decoder $\mathcal{D}:y^n\mapsto \tilde{y}$ such that the induced conditional distribution of $\tilde{y}$ given $\tilde{x}$ is within the $\epsilon$ distance of $q(\tilde{y}|\tilde{x})$ for every $\tilde{x},\tilde{y}$? This question for point-to-point channels as stated here, is easy to answer. Indeed, if the capacity of $q(\tilde y| \tilde x)$ is zero, then we only need local randomness to simulate it. Otherwise, simulation is feasible iff the capacity of $p(y|x)$ is non-zero. We observe that the answer to the simulation problem for point-to-point channels is easy since such channels with zero capacity have a trivial characterization.

Let us ask the same question for two-way channels: can we simulate a single copy of $q(\tilde{y}_1, \tilde{y}_2|\tilde{x}_1, \tilde{x}_2)$ from an arbitrary number of copies of $p(y_1, y_2|x_1, x_2)$? More precisely, is there $n$ and local encoding maps $\mathcal E_i: \tilde x_i \mapsto x_i^n$, for $i=1, 2$ and decoding maps $\mathcal D_i: y_i^n\mapsto \tilde y_i$ such that the induced conditional distribution of $(\tilde y_1, \tilde y_2)$ conditioned on $(\tilde x_1, \tilde x_2)$ is within $\epsilon$ distance of $q(\tilde{y}_1, \tilde{y}_2|\tilde{x}_1, \tilde{x}_2)$? 

We may make this problem even more general by adding feedback to the channel. In this case the $i$-th encoder, $i=1,2$, before using the $j$-th copy of $p(y_1, y_2|x_1, x_2)$ have access to the outputs of previous channels. 
More specifically, assume that there are two parties who have the channel  $p(y_1, y_2|x_1, x_2)$ as a resource between them, which they can use arbitrarily many times. To begin with, the  $i$-th party, $i=1,2$,  is given $\tilde{x}_i$, the input of the channel  to be simulated. The $i$-th party creates input $X_{ij}$ at time instance $j$, using his past inputs and outputs of the channel, i.e., from $(\tilde{x}_i, X_{i[j-1]}, Y_{i[j-1]})$. After feeding $(X_{1j}, X_{2j})$ to the $j$-th copy of $p(y_1, y_2|x_1, x_2)$, the output $(Y_{1j}, Y_{2j})$ is generated. Finally, after using the two-way channel $p(y_1, y_2|x_1, x_2)$ for $n$ times, the $i$-th party creates $\tilde Y_i$ from $(\tilde{x}_i, X_{i[n]}, Y_{i[n]})$ to create $\tilde{Y}_i$. We need the imposed conditional distribution on $\tilde{Y}_1, \tilde{Y}_2$ to be close to $q(\tilde{y}_1, \tilde{y}_2|\tilde{x}_1, \tilde{x}_2)$.
 
To answer the possibility of channel simulation in the bipartite case as above, it is appropriate to restrict ourselves to zero-capacity channels (i.e., to channel whose capacity region is $\mathcal{C}=\{(0,0\}$). The point is that (unlike the point-to-point case) there are non-trivial two-way channels with zero capacity.


Consider for instance, the following class of zero-capacity channels with binary inputs and binary outputs (i.e., $y_1, y_2, x_1, x_2\in \{0,1\}$): 
\begin{align}\label{eq:iso-box}
\PR_{\eta}(y_1,y_2|x_1,x_2) := \begin{cases}
\frac{1+\eta}{4} \qquad\qquad \text{if } y_1 \oplus y_2= x_1\wedge x_2,\\
\frac{1-\eta}{4} \qquad\qquad \text{otherwise},\\
\end{cases}
\end{align}
where  $0\leq \eta\leq 1$. 
Then the following statement is proved in our recent work~\cite{OurPaper}.
\begin{theorem} \cite{OurPaper}
For $1/2< \eta_1<\eta_2<1$, two parties cannot use an arbitrary number of copies of $\PR_{\eta_1}$ to  generate a single copy of $\PR_{\eta_2}$.\label{thm:PR}
\end{theorem}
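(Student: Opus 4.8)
The plan is to reduce the claimed impossibility result to a monotonicity statement about one of the tensorizing measures of correlation constructed in this paper, applied to the joint distribution induced by the two-way channel together with its inputs. The key conceptual point is that a two-way channel $p(y_1,y_2|x_1,x_2)$, when fed a fixed product input distribution, produces a joint distribution on $(X_1,X_2,Y_1,Y_2)$, and the question of simulating $\PR_{\eta_2}$ from many copies of $\PR_{\eta_1}$ (even with feedback) becomes a question about whether the correlation resources available after $n$ uses of $\PR_{\eta_1}$ can dominate those of a single copy of $\PR_{\eta_2}$. I would first make precise what ``correlation resource'' is invariant under the allowed operations: each party applies local (possibly randomized, possibly adaptive) maps, so the natural invariant is a quantity that tensorizes under i.i.d.\ repetition and is monotone under local stochastic processing — exactly the kind of region $\Upsilon$ produced by Theorem~\ref{thm:main}, or more concretely the conditional hypercontractivity ribbon $\fR(\cdot,\cdot|\cdot)$ or conditional $s^*$ of Section~\ref{sec:cond:HC}.

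The main steps, in order, would be: (i) set up the simulation protocol formally and argue, via a standard causal/adaptive argument, that after $n$ uses of $\PR_{\eta_1}$ with feedback the induced distribution on all variables held by the two parties is obtained by local processing of $n$ i.i.d.\ copies of a distribution derived from $\PR_{\eta_1}$; here the feedback has to be handled by conditioning — the past outputs at party $i$ play the role of the side information $Z$, so the relevant statement is a conditional data-processing inequality, which is why the conditional regions of Section~\ref{sec:cond:HC} are the right tool; (ii) compute (or bound) the relevant tensorizing quantity — say $\fR(Y_1,Y_2|X_1X_2)$ or an appropriate conditional $s^*$ — for the distribution associated with $\PR_\eta$, as a function of $\eta$, and show it is strictly monotone in $\eta$ on the interval $(1/2,1)$; (iii) combine: by tensorization the quantity for $n$ copies of $\PR_{\eta_1}$ equals that for one copy, by data processing the quantity for the simulated $\PR_{\eta_2}$ is no larger, and by strict monotonicity $\eta_1<\eta_2$ forces a contradiction once we also check a continuity/robustness statement so that $\epsilon$-approximate simulation (not exact) still yields the contradiction for small $\epsilon$.

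The hard part, I expect, will be step (ii) together with pinning down which conditioned-on random variables make the data-processing inequality actually apply to the two-way channel scenario. The subtlety is that a single use of a two-way channel is not a product channel on the two sides — the outputs $Y_1,Y_2$ depend jointly on $(X_1,X_2)$ — so one must feed the channel a carefully chosen input distribution (and possibly condition on a time-sharing variable) so that the induced correlation is both (a) large enough to certify what $\PR_{\eta_2}$ can do and (b) of a form to which the local-processing monotonicity of $\Upsilon$ literally applies. For the $\PR_\eta$ family the natural choice is the uniform input distribution, for which $Y_1\oplus Y_2$ is a noisy version of $X_1\wedge X_2$ with bias $\eta$; the computation of the conditional hypercontractivity ribbon or $s^*$ for this distribution, and verification of strict monotonicity in $\eta$, is the technical heart. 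I would also need the robustness-to-$\epsilon$ argument, which should follow from continuity of the relevant quantity in the total-variation distance of the distribution, but which must be stated carefully because these quantities are defined via suprema over auxiliary variables.

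Finally, I would note that Theorem~\ref{thm:PR} is quoted here from~\cite{OurPaper} rather than reproved, so in this paper the expected ``proof'' is really a pointer explaining how the conditional tensorizing regions developed in Section~\ref{sec:cond:HC} furnish exactly the invariant needed; the self-contained argument lives in the companion paper, and here it suffices to indicate that the conditional hypercontractivity ribbon (equivalently conditional $s^*$) is monotone under the two-way-channel simulation operations and strictly ordered along the $\PR_\eta$ family.
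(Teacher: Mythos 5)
Your overall strategy---find a quantity that tensorizes and is monotone under local processing, evaluate it on the $\PR_\eta$ family, and invoke continuity for $\epsilon$-approximate simulation---is the right frame, and you are also right that the paper itself only sketches this and defers the computation for $\PR_\eta$ to \cite{OurPaper}. But your step (i) contains a genuine gap. You claim that after $n$ adaptive uses of $\PR_{\eta_1}$ with feedback, the parties' variables are ``obtained by local processing of $n$ i.i.d.\ copies of a distribution derived from $\PR_{\eta_1}$.'' This is false: because the inputs $X_{ij}$ at time $j$ depend on past outputs, the $n$ channel uses do not produce i.i.d.\ copies of any fixed joint distribution on $(X_1,X_2,Y_1,Y_2)$, so neither the tensorization of $\fR(\cdot,\cdot|\cdot)$ (stated for i.i.d.\ repetitions of a fixed tripartite source) nor the data-processing inequality for local stochastic maps applies directly. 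The paper's substitute for this invalid reduction is Lemma~\ref{lemma12}: a per-channel-use subadditivity bound
$G^z_{\lambda_1,\lambda_2}(AY_1,BY_2)\le G^z_{\lambda_1,\lambda_2}(A,B)+G^z_{\lambda_1,\lambda_2}(p(y_1,y_2|x_1,x_2))+\lambda_1 I(X_2;Y_1|X_1)+\lambda_2 I(X_1;Y_2|X_2)$,
where $(A,B)$ is the current information state and $X_1,X_2$ are deterministic functions of $A,B$. One then inducts over the $n$ uses, starting from independent private randomness where $G^z\le 0$.

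The second missing ingredient is the zero-capacity hypothesis, which you never invoke. The cross terms $I(X_2;Y_1|X_1)$ and $I(X_1;Y_2|X_2)$ in Lemma~\ref{lemma12} are exactly the one-way capacities of the two-way channel under the induced input distribution; they vanish for every input distribution precisely because $\PR_{\eta}$ has zero capacity (its marginal outputs are uniform regardless of the inputs). Without this, the induction does not close and $G^z\le 0$ is not preserved across channel uses. So the ``careful choice of input distribution'' you worry about in step (ii) is not actually where the difficulty lies---the argument must hold for the adaptively induced inputs, whatever they are---and it is the zero-capacity structure plus the Lemma~\ref{lemma12} accounting, not an i.i.d.\ tensorization, that makes the two-way-channel case work. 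Your step (ii) (strict monotonicity of the ribbon along $\PR_\eta$) and step (iii) (continuity in $\epsilon$) match what is needed and what is carried out in \cite{OurPaper}.
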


Our goal here is to illustrate this result from the perspective of additivity and tensorization, based on the ideas we developed.  

Given $p(y_1, y_2)$, 
define 
\begin{align}\label{eq:g-z-def}
G^z_{\lambda_1, \lambda_2}(Y_1, Y_2)=\max_{p(u|y_1y_2)}-I(U;Y_1Y_2)+\lambda_1 I(U;Y_1)+\lambda_2 I(U;Y_2).
\end{align}
Observe that this function is the one for bipartite hypercontractivity ribbon and is a special case of~\eqref{defGinV6}. Therefore, it satisfies the data processing and additivity properties. Now, given a two-way channel $q(y_1, y_2|x_1, x_2)$, let
$$G^z_{\lambda_1, \lambda_2}(q(y_1,y_2|x_1, x_2))=\max_{x_1, x_2}{G}^z_{\lambda_1, \lambda_2}(Y_1, Y_2|X_1=x_1, X_2=x_2).$$
Observe that $G^z_{\lambda_1, \lambda_2}(q(y_1,y_2|x_1, x_2))$ indeed corresponds to the conditional hypercontractivity ribbon of outputs given inputs, as in Lemma \ref{lemma9}.
The following lemma is the key step to prove Theorem~\ref{thm:PR}.

\begin{lemma}\label{lemma12}
Assume that $(A, B)$ are sampled from some bipartite distribution $p(a,b)$. Suppose that we create $X_1$ as a \emph{function} of $A$, and $X_2$ as a \emph{function} of $B$. Then $(X_1, X_2)$ are put at the inputs of a two-way channe $p(y_1,y_2|x_1, x_2)$ which outputs $(Y_1, Y_2)$. Then for any $\lambda_1, \lambda_2\geq 0$ we have
$$G^z_{\lambda_1, \lambda_2}(AY_1, BY_2) - \lambda_1 I(X_2; Y_1|X_1)-\lambda_2 I(X_1;Y_2|X_2)\leq 
G^z_{\lambda_1, \lambda_2}(A,B)+G^z_{\lambda_1, \lambda_2}(p(y_1,y_2|x_1, x_2)).$$
\end{lemma}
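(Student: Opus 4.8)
The plan is to take an auxiliary variable $U$ attaining the maximum that defines $G^z_{\lambda_1,\lambda_2}(AY_1,BY_2)$ — so $U$ is generated from $(A,Y_1,B,Y_2)$ and $G^z_{\lambda_1,\lambda_2}(AY_1,BY_2)=-I(U;AY_1BY_2)+\lambda_1 I(U;AY_1)+\lambda_2 I(U;BY_2)$ — and to split the ``work'' of $U$ into a piece that sees only $(A,B)$ and a piece that sees only the channel outputs, charging the overlap to the two conditional-information terms $\lambda_1 I(X_2;Y_1|X_1)$ and $\lambda_2 I(X_1;Y_2|X_2)$. Two structural facts from the hypothesis will be used repeatedly: $X_1$ is a deterministic function of $A$ and $X_2$ of $B$; and, conditioned on $(X_1,X_2)$, the output $(Y_1,Y_2)$ is independent of $(A,B)$, so that $I(AB;Y_1Y_2|X_1X_2)=0$.

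First I would apply the chain rule, writing $I(U;AY_1BY_2)=I(U;AB)+I(U;Y_1Y_2|AB)$, $I(U;AY_1)=I(U;A)+I(U;Y_1|A)$ and $I(U;BY_2)=I(U;B)+I(U;Y_2|B)$, which decomposes $G^z_{\lambda_1,\lambda_2}(AY_1,BY_2)$ as
$$\big[{-}I(U;AB)+\lambda_1 I(U;A)+\lambda_2 I(U;B)\big]\;+\;\big[{-}I(U;Y_1Y_2|AB)+\lambda_1 I(U;Y_1|A)+\lambda_2 I(U;Y_2|B)\big].$$
The first bracket depends only on the joint law of $(U,A,B)$, and the induced conditional $p(u|a,b)$ is an admissible test channel, so it is at most $G^z_{\lambda_1,\lambda_2}(A,B)$; it then remains to bound the second bracket by $\lambda_1 I(X_2;Y_1|X_1)+\lambda_2 I(X_1;Y_2|X_2)+G^z_{\lambda_1,\lambda_2}(p(y_1,y_2|x_1,x_2))$.

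For the second bracket I would first note $I(U;Y_1|A)\le I(UB;Y_1|A)=I(B;Y_1|A)+I(U;Y_1|AB)$, and then check that $I(B;Y_1|A)=I(X_2;Y_1|A)\le I(X_2;Y_1|X_1)$ — the equality because $Y_1\perp B\,|\,(A,X_2)$ (the output depends on the inputs only, and $X_2$ is a function of $B$), and the inequality by conditioning on the function $X_1$ of $A$ and discarding the nonnegative term $I(A;Y_1|X_1)$. Doing the same for the $Y_2$-term turns the second bracket into at most $\lambda_1 I(X_2;Y_1|X_1)+\lambda_2 I(X_1;Y_2|X_2)+\big[{-}I(U;Y_1Y_2|AB)+\lambda_1 I(U;Y_1|AB)+\lambda_2 I(U;Y_2|AB)\big]$. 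The final bracket I would recognize as a conditional-hypercontractivity-ribbon quantity for the channel (as in Lemma~\ref{lemma9}): set $V=(U,A,B)$; for each fixed $(x_1,x_2)$, $V$ is generated from $(Y_1,Y_2)$ by a channel depending only on $(y_1,y_2)$, and since $X_1,X_2$ are functions of $(A,B)$ and $(A,B)\perp(Y_1,Y_2)\,|\,(X_1,X_2)$, the chain rule gives $I(V;Y_1Y_2|X_1X_2)=I(U;Y_1Y_2|AB)$ and $I(V;Y_i|X_1X_2)=I(U;Y_i|AB)$. Hence that bracket equals $\mathbb{E}_{X_1X_2}\big[{-}I(V;Y_1Y_2|X_1,X_2)+\lambda_1 I(V;Y_1|X_1,X_2)+\lambda_2 I(V;Y_2|X_1,X_2)\big]\le\max_{x_1,x_2}G^z_{\lambda_1,\lambda_2}(Y_1,Y_2|X_1{=}x_1,X_2{=}x_2)=G^z_{\lambda_1,\lambda_2}(p(y_1,y_2|x_1,x_2))$. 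Assembling the three bounds and rearranging yields the lemma.

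The hard part will be the middle step: isolating \emph{exactly} the advertised corrections $\lambda_1 I(X_2;Y_1|X_1)$ and $\lambda_2 I(X_1;Y_2|X_2)$ rather than coarser quantities, which forces careful bookkeeping around ``$X_i$ is a function of its source'' and ``the channel output depends on the inputs alone''. By contrast, the chain-rule split and the reduction to the channel term are routine once one uses that $G^z$ (and its conditional version) is additive and satisfies data processing, as already established.
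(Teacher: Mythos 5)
Your proof is correct and follows essentially the same route as the paper's: pick the optimal $U$ for $G^z_{\lambda_1,\lambda_2}(AY_1,BY_2)$, split via the chain rule into a source part bounded by $G^z_{\lambda_1,\lambda_2}(A,B)$ and a channel part, and use that $X_i$ is a function of its source together with $I(AB;Y_1Y_2|X_1X_2)=0$ to charge the overlap to $\lambda_1 I(X_2;Y_1|X_1)+\lambda_2 I(X_1;Y_2|X_2)$. Your identification of the residual bracket as the conditional ribbon quantity with $V=(U,A,B)$ is just a cleaner packaging of the same algebra the paper carries out term by term.
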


Assuming this lemma the following theorem gives a method for proving the impossibility of channel simulation.

\begin{theorem} For any two-way channel $p(x_1, x_2| y_1, y_2)$ let 
$$\Upsilon^z(p(x_1, x_2| y_1, y_2)) = \{(\lambda_1, \lambda_1)\in [0, 1]^2|\,  G^z_{\lambda_1, \lambda_2}(q(y_1,y_2|x_1, x_2))\leq 0\}.$$
Assume that $p(x_1, x_2| y_1, y_2)$ has zero capacity. Then simulation of $q(\tilde{y}_1, \tilde{y}_2|\tilde{x}_1, \tilde{x}_2)$ with $p(x_1, x_2| y_1, y_2)$, as defined above, is possible only if $\Upsilon^z(p(x_1, x_2| y_1, y_2))\subseteq \Upsilon^z(q(\tilde{y}_1, \tilde{y}_2|\tilde{x}_1, \tilde{x}_2))$.
\end{theorem}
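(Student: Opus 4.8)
The plan is to reduce the impossibility of channel simulation to a monotonicity statement about the region $\Upsilon^z$ along an arbitrary simulation protocol, and then to iterate Lemma~\ref{lemma12} over the $n$ channel uses. The starting observation is that a simulation protocol using $n$ copies of $p(y_1,y_2|x_1,x_2)$ is, from the two parties' viewpoint, a sequence of local operations interleaved with channel uses: party $i$ holds $\tilde x_i$, and at each time step produces a channel input as a function of its current local data, feeds it to a fresh copy of the channel, and receives its own output. Crucially, since $p(y_1,y_2|x_1,x_2)$ has zero capacity, the channel cannot be used to increase the region $\Upsilon^z$ of the parties' joint variables in a way that helps simulation — this is exactly what Lemma~\ref{lemma12} quantifies, with the correction terms $\lambda_1 I(X_2;Y_1|X_1)+\lambda_2 I(X_1;Y_2|X_2)$ accounting for the (bounded) information the channel conveys, and the additive term $G^z_{\lambda_1,\lambda_2}(p(y_1,y_2|x_1,x_2))$ being nonpositive precisely when $(\lambda_1,\lambda_2)\in\Upsilon^z(p)$.

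First I would set up the induction. Let $(\lambda_1,\lambda_2)\in\Upsilon^z(p(x_1,x_2|y_1,y_2))$, so $G^z_{\lambda_1,\lambda_2}(p)\le 0$ by definition of $\Upsilon^z$. Define, after $j$ channel uses, the accumulated local states $A_j$ (party $1$'s view: $\tilde x_1, X_{1[j]}, Y_{1[j]}$) and $B_j$ (party $2$'s view), and track the quantity $G^z_{\lambda_1,\lambda_2}(A_j,B_j)$. Initially, before any channel use and after the inputs $\tilde x_1,\tilde x_2$ are given but while we condition on $(\tilde x_1,\tilde x_2)$ being fixed, the parties share only local randomness, so $G^z_{\lambda_1,\lambda_2}(A_0,B_0)\le 0$ (local randomness is independent on the two sides, and by data processing / the definition one checks $G^z\le 0$ for independent variables since $\Upsilon^z$ contains $[0,1]^2$ restricted appropriately — more carefully, conditioned on $\tilde x_1,\tilde x_2$ the initial variables are independent, hence $G^z=0$). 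Then at step $j+1$: each party forms its channel input as a function of its current view, invokes Lemma~\ref{lemma12} with $(A,B)=(A_j,B_j)$, and obtains
\begin{align*}
G^z_{\lambda_1,\lambda_2}(A_{j+1},B_{j+1}) \le G^z_{\lambda_1,\lambda_2}(A_j,B_j) + G^z_{\lambda_1,\lambda_2}(p) + \lambda_1 I(X_{2,j+1};Y_{1,j+1}|X_{1,j+1}) + \lambda_2 I(X_{1,j+1};Y_{2,j+1}|X_{2,j+1}),
\end{align*}
where $A_{j+1}$ absorbs $X_{1,j+1},Y_{1,j+1}$ into $A_j$ and likewise for $B_{j+1}$ (applying data processing to pass from the $AY_1,BY_2$ of the lemma to the full updated views, since the new views are functions of these). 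Summing over $j=0,\dots,n-1$ and using $G^z_{\lambda_1,\lambda_2}(p)\le 0$, the telescoping gives
\begin{align*}
G^z_{\lambda_1,\lambda_2}(A_n,B_n) \le \lambda_1 \sum_{j=1}^n I(X_{2,j};Y_{1,j}|X_{1,j}) + \lambda_2 \sum_{j=1}^n I(X_{1,j};Y_{2,j}|X_{2,j}).
\end{align*}
Now here is the point where zero capacity enters a second time: for a zero-capacity two-way channel these mutual-information terms must all vanish (conditioned on the fixed realizations $\tilde x_1,\tilde x_2$ and summed, they would otherwise give a nonzero rate of reliable communication in one direction, contradicting $\mathcal C=\{(0,0)\}$); hence $G^z_{\lambda_1,\lambda_2}(A_n,B_n)\le 0$. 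Finally, $\tilde Y_i$ is a function of party $i$'s final view, so by data processing $G^z_{\lambda_1,\lambda_2}(\tilde Y_1,\tilde Y_2)\le G^z_{\lambda_1,\lambda_2}(A_n,B_n)\le 0$; since this holds for the outputs conditioned on each $(\tilde x_1,\tilde x_2)$, and since the simulated conditional distribution is within $\epsilon$ of $q(\tilde y_1,\tilde y_2|\tilde x_1,\tilde x_2)$, a continuity argument for $G^z$ (it is a maximum of a continuous function over a compact set of channels $p(u|y_1y_2)$ with bounded cardinality) lets $\epsilon\to 0$ and yields $G^z_{\lambda_1,\lambda_2}(q(y_1,y_2|x_1,x_2))\le 0$, i.e.\ $(\lambda_1,\lambda_2)\in\Upsilon^z(q)$. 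This is the desired inclusion $\Upsilon^z(p)\subseteq\Upsilon^z(q)$.

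The main obstacle I anticipate is the careful handling of the feedback and the correction terms: the term $I(X_{2,j};Y_{1,j}|X_{1,j})$ in Lemma~\ref{lemma12} must be argued to sum to something vanishing, and this is exactly a ``zero-capacity implies no information flow'' claim that needs to be made rigorous — one must rule out that the channel sends a nonzero rate in one direction while sending nothing back, which is true for genuinely zero-(two-way-)capacity channels but requires an honest argument (e.g.\ via Fano applied to a hypothetical one-way code built from the simulation protocol). The second delicate point is bookkeeping in the induction: Lemma~\ref{lemma12} is stated for $X_1,X_2$ that are \emph{functions} of $A,B$, whereas in a randomized protocol the parties use fresh local randomness at each step; this is absorbed by folding the local randomness into $A_j,B_j$ at the outset (local randomness can be given up front without loss) so that each step's channel input is deterministic given the current view — and one must check $G^z$ does not increase when the extra independent local randomness is appended, which follows from the data-processing/additivity of $G^z$ already established in Section~\ref{sec:ex2}. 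Everything else is telescoping and continuity.
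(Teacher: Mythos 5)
Your proposal is correct and follows essentially the same route as the paper: fold the private randomness into the initial states so that the initial $G^z$ vanishes by independence, iterate Lemma~\ref{lemma12} over the $n$ channel uses, apply data processing to pass to $(\tilde Y_1,\tilde Y_2)$, and let $\epsilon\to 0$ by continuity. The one step you flagged as the main obstacle --- showing the correction terms $I(X_{2,j};Y_{1,j}\mid X_{1,j})$ vanish --- is handled in the paper not by a Fano-type argument over the whole protocol but per channel use: zero capacity implies $I(X_2;Y_1\mid X_1=x_1)=0$ for every $x_1$ (citing \cite[Proposition 17.2]{elgamal}), hence $I(X_2;Y_1\mid X_1)=0$ for \emph{any} joint input distribution, correlated or not, so each correction term is identically zero and no telescoping of $o(n)$ quantities is needed.
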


\begin{proof}
Let $A$ and $B$ respectively denote all information available to the two parties (including their private randomness) before using the two-way channel $p(y_1,y_2|x_1, x_2)$ at some time step.  Then their available information after using the channel is $AY_1$ and $BY_2$. When the channel has zero capacity, we have $I(X_2; Y_1|X_1=x_1)=0$ for every value of $x_1$, \cite[Proposition 17.2]{elgamal}; similarly, we have $I(X_1;Y_2|X_2=x_2)=0$ for every value of $x_2$. Thus, $I(X_2; Y_1|X_1)=I(X_1;Y_2|X_2)=0$ for any $p(x_1,x_2)$. Then by Lemma~\ref{lemma12} we have 
$$G^z_{\lambda_1, \lambda_2}(AY_1, BY_2) \leq 
G^z_{\lambda_1, \lambda_2}(A,B)+G^z_{\lambda_1, \lambda_2}(p(y_1,y_2|x_1, x_2)).$$
This means that, if $G^z_{\lambda_1, \lambda_2}(A,B)\leq 0$ and $G^z_{\lambda_1, \lambda_2}(p(y_1,y_2|x_1, x_2))\leq 0$, then 
$G^z_{\lambda_1, \lambda_2}(AY_1, BY_2) \leq 0$. 

Now consider a simulation code with error $\epsilon$. The initial information state is $(T'_1,T'_2)=(\tilde{x}_1T_1,\tilde{x}_2T_2)$, where $\tilde{x}_1$ and $\tilde{x}_2$ are two constants (the inputs of the channel we want to simulate), and $T_1$ and $T_2$ are two mutually independent private sources of randomness. Since $T'_1, T'_2$ are independent for any $(\lambda_1, \lambda_2)\in [0,1]^2$ we have $G^z_{\lambda_1, \lambda_2}(T'_1,T'_2)= 0$. Therefore, if $(\lambda_1, \lambda_2)$ is such that $G^z_{\lambda_1, \lambda_2}(p(y_1,y_2|x_1, x_2))\leq 0$, by repeating the above argument we find that $$G^z_{\lambda_1, \lambda_2}(\tilde{x}_1X_{1[n]}Y_{1[n]}, \tilde{x}_2X_{2[n]}Y_{2[n]})\leq 0$$ at the final stage of communication. From the data processing property of $G^z_{\lambda_1, \lambda_2}$, we find that $G^z_{\lambda_1, \lambda_2}(\tilde{Y}_1, \tilde{Y}_2)\leq 0$.  Thus, for any arbitrary $p(u|\tilde y_1\tilde y_2)$, we have 
$$-I(U;\tilde Y_1\tilde Y_2)+\lambda_1 I(U;\tilde Y_1)+\lambda_2 I(U;\tilde Y_2)\leq 0.$$
Now, letting $\epsilon$ converge to zero and using the continuity of mutual information in the underlying distribution, we get that $(\lambda_1, \lambda_2)$ belongs to $\Upsilon^z(q(\tilde{y}_1, \tilde{y}_2|\tilde{x}_1, \tilde{x}_2))$.
This gives the desired result. 
\end{proof}

We now give a proof for Lemma~\ref{lemma12}.

\begin{proof}[Proof of Lemma \ref{lemma12}]
Take some $p(u, a, b, x_1, x_2, y_1, y_2)$ that achieves the maximum in $G^z_{\lambda_1, \lambda_2}(AY_1, BY_2)$. Then we have
\begin{align}I(U;Y_1Y_2AB)&=I(U;AB)+I(U;Y_1Y_2|AB)\nonumber
\\&=I(U;AB)+I(U;Y_1Y_2|ABX_1X_2)\label{eqn:Ltwo-way1}
\\&=I(U;AB)+I(UAB;Y_1Y_2|X_1X_2)\nonumber
\\&=I(U;AB)-\lambda_1 I(U;A)-\lambda_2I(U;B) \nonumber
\\&\quad+I(UAB;Y_1Y_2|X_1X_2)-\lambda_1 I(UAB;Y_1|X_1X_2)-\lambda_2 I(UAB;Y_2|X_1X_2)\nonumber
\\&\quad +\lambda_1 I(U;A)+\lambda_2I(U;B) +\lambda_1 I(UAB;Y_1|X_1X_2)+\lambda_2 I(UAB;Y_2|X_1X_2)\nonumber
\\&\geq -G^z_{\lambda_1, \lambda_2}(A,B)-G^z_{\lambda_1, \lambda_2}(q(y_1,y_2|x_1, x_2))\nonumber
\\&\quad +\lambda_1 I(U;A)+\lambda_2I(U;B) +\lambda_1 I(UAB;Y_1|X_1X_2)+\lambda_2 I(UAB;Y_2|X_1X_2),\nonumber
\end{align}
where equation \eqref{eqn:Ltwo-way1} follows from the fact that $X_1$ and $X_2$ are functions of $A$ and $B$ respectively.

Since $p(u, a, b, x_1, x_2, y_1, y_2)$ achieves the maximum in $G^z_{\lambda_1, \lambda_2}(AY_1, BY_2)$, we have $$-G^z_{\lambda_1, \lambda_2}(AY_1, BY_2)=I(U;Y_1Y_2AB)-\lambda_1 I(U;Y_1A)-\lambda_2 I(U;Y_2B).$$
Hence,
\begin{align*}
-G^z_{\lambda_1, \lambda_2}(AY_1, BY_2)&\geq 
-G^z_{\lambda_1, \lambda_2}(A,B)-G^z_{\lambda_1, \lambda_2}(q(y_1,y_2|x_1, x_2))
\\&\quad +\lambda_1 I(U;A)+\lambda_2I(U;B) +\lambda_1 I(UAB;Y_1|X_1X_2)+\lambda_2 I(UAB;Y_2|X_1X_2)
\\&\quad -\lambda_1 I(U;Y_1A)-\lambda_2 I(U;Y_2B)
\\&= 
-G^z_{\lambda_1, \lambda_2}(A,B)-G^z_{\lambda_1, \lambda_2}(q(y_1,y_2|x_1, x_2))
\\&\quad+ \lambda_1 [I(UAB;Y_1|X_1X_2)-I(U;Y_1|A)]+\lambda_2 [I(UAB;Y_2|X_1X_2)-I(U;Y_2|B)]
\\&= 
-G^z_{\lambda_1, \lambda_2}(A,B)-G^z_{\lambda_1, \lambda_2}(q(y_1,y_2|x_1, x_2))
\\&\quad+ \lambda_1 [I(UAB;Y_1|X_1X_2)-I(U;Y_1|AX_1)]+\lambda_2 [I(UAB;Y_2|X_1X_2)-I(U;Y_2|BX_2)]
\\&= 
-G^z_{\lambda_1, \lambda_2}(A,B)-G^z_{\lambda_1, \lambda_2}(q(y_1,y_2|x_1, x_2))
\\&\quad+ \lambda_1 [I(UABX_2;Y_1|X_1)-I(U;Y_1|AX_1)]+\lambda_2 [I(UABX_1;Y_2|X_2)-I(U;Y_2|BX_2)]
\\&\quad- \lambda_1 I(X_2;Y_1|X_1)-\lambda_2 I(X_1;Y_2|X_2)
\\&\geq
-G^z_{\lambda_1, \lambda_2}(A,B)-G^z_{\lambda_1, \lambda_2}(q(y_1,y_2|x_1, x_2))
\\&\quad- \lambda_1 I(X_2;Y_1|X_1)-\lambda_2 I(X_1;Y_2|X_2).
\end{align*}
\end{proof}


\section{Conclusion and Future Work}
In this paper we defined new classes of measures of correlation that satisfy the tensorization property. These measures were defined using additive functions, which themselves are useful for the non-interactive distribution simulation with a non-zero rate. Conditional versions of the proposed measures are derived, and are shown to be applicable to the secure distribution simulation problem. Since explicit computation of the proposed regions is generally difficult, we looked at local perturbation of the regions. Tensorization and data processing of the local regions can be shown via an analogy between propoerties of mutual information and variance. In the appendices, we study different characterizations of the multi-partite HC ribbon. We also define a new multi-partite maximal correlation. 

All the source coding problems that we considered have a capacity region characterized by a single auxiliary random variable. It would be interesting to consider problems with more than one auxiliary random variable. Except for the section on two-way channels, our main emphasis was on the source coding problems. It would be interesting to explore tensorizing measures for channels. 

The multi-partite HC ribbon has a description in terms of Schatten norms. For this reason, it has found applications in other areas of mathematics. It would be interesting to see whether other regions defined in this paper have similar characterizations. 

Finally, we defined a notion of multi-partite maximal correlation. It would be interesting to see if this measure is related to the maximal correlation ribbon (MC ribbon). The MC ribbon is the local perturbation of the HC ribbon, and can be derived by setting $X_{k+1}=X_{[k]}$ in Definition \ref{def10}. 

\appendix
\begin{center}\LARGE{Appendix}\end{center}



\section{Conditional $\rho$ and $s^*$} \label{conditionalRhoSstar}
We need the following definition:
\begin{definition} [Conditional Maximal Correlation]\cite{BeigiTse}
For a tripartite distribution $p(x,y,z)$, the conditional maximal correlation $\rho(X, Y|Z)$ is defined as
\begin{align*}
\rho(X, Y|Z)&=\max_{z:p(z)>0} \rho(X, Y|Z=z).
\end{align*}
\end{definition}

\begin{lemma}\cite{BeigiTse} We have
$$\rho^2(X, Y|Z)=\max_{\E[f|Z]=0,\, \E[f^2]=1} \E_{YZ}[(\E_{X|YZ}[f(X,Z)])^2],
$$
where maximum is taken over all functions $f:\mathcal X\times \mathcal Z\rightarrow \mathbb R$.
\end{lemma}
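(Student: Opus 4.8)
The plan is to reduce the conditional identity to the unconditional ``operator'' form of maximal correlation, applied slice by slice on $\{Z=z\}$, and then to optimize over how a test function distributes its energy across the slices. First I would record the unconditional fact that for a bipartite law $p(x,y)$,
$$\rho^2(X,Y)=\max_{\E[f(X)]=0,\ \E[f(X)^2]=1}\E\big[(\E[f(X)\,|\,Y])^2\big].$$
This follows from the bilinear form of $\rho$ recalled just after \eqref{eq:max-correlatoin-31}: for fixed $f$ with $\E[f]=0$, the maximum of $\E[f(X)g(Y)]=\E\big[\E[f(X)|Y]\,g(Y)\big]$ over $g$ with $\E[g]=0$, $\E[g^2]=1$ equals $\big(\E[(\E[f(X)|Y])^2]\big)^{1/2}$ by Cauchy--Schwarz, using that $\E\big[\E[f(X)|Y]\big]=\E[f(X)]=0$; squaring and maximizing over $f$ gives the display. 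Applying this to the conditional law $p(x,y\,|\,z)$ for each $z$ with $p(z)>0$ and rescaling, I would deduce that for every $h:\mathcal X\to\mathbb R$ with $\E[h|Z=z]=0$ one has $\E_{Y|Z=z}\big[(\E_{X|YZ=z}[h(X)])^2\big]\le \rho^2(X,Y|Z=z)\cdot\E[h^2|Z=z]$, the degenerate case $\E[h^2|Z=z]=0$ being trivial.

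For the inequality ``$\le$'', I would take any $f:\mathcal X\times\mathcal Z\to\mathbb R$ with $\E[f|Z]=0$ and $\E[f^2]=1$, set $f_z=f(\cdot,z)$ and $\alpha_z=\E[f_z^2|Z=z]$ (so $\sum_z p(z)\alpha_z=1$ and $\E[f_z|Z=z]=0$), condition on $Z$, and bound slice by slice:
$$\E_{YZ}\big[(\E_{X|YZ}[f(X,Z)])^2\big]=\sum_{z:p(z)>0}p(z)\,\E_{Y|Z=z}\big[(\E_{X|YZ=z}[f_z(X)])^2\big]\le\sum_{z:p(z)>0}p(z)\,\alpha_z\,\rho^2(X,Y|Z=z),$$
which is at most $\rho^2(X,Y|Z)\sum_z p(z)\alpha_z=\rho^2(X,Y|Z)$, using $\rho^2(X,Y|Z=z)\le \rho^2(X,Y|Z)=\max_{z':p(z')>0}\rho^2(X,Y|Z=z')$. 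Taking the supremum over $f$ gives ``$\le$''. For ``$\ge$'', I would pick $z^*$ attaining $\rho^2(X,Y|Z=z^*)=\rho^2(X,Y|Z)$ and an optimal slice function $g^*$ (with $\E[g^*|Z=z^*]=0$, $\E[(g^*)^2|Z=z^*]=1$), define $f(x,z)=g^*(x)\,\mathbf{1}[z=z^*]/\sqrt{p(z^*)}$, and check directly that $\E[f|Z]=0$, $\E[f^2]=\E[(g^*)^2|Z=z^*]=1$, and $\E_{YZ}[(\E_{X|YZ}[f])^2]=p(z^*)\cdot p(z^*)^{-1}\E_{Y|Z=z^*}[(\E_{X|YZ=z^*}[g^*(X)])^2]=\rho^2(X,Y|Z)$.

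The only step that is not pure bookkeeping is the first one, the operator form of unconditional maximal correlation; the remainder is a conditioning argument together with the elementary estimate $\rho^2(X,Y|Z=z)\le\rho^2(X,Y|Z)$ and the normalization $\sum_z p(z)\alpha_z=1$. Minor points to handle carefully are the degenerate slices with $\alpha_z=0$ (where the corresponding term vanishes) and the attainment of all maxima, which holds in the finite-alphabet setting in force here.
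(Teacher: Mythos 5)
Your proof is correct and rests on the same two ingredients as the paper's (very brief) proof sketch — a decomposition over the slices $\{Z=z\}$ together with Cauchy--Schwarz — applied in the opposite order: the paper first establishes the joint bilinear characterization $\rho(X,Y|Z)=\max\E[f(X,Z)\,g(Y,Z)]$ under conditional centering and unit second moments and then eliminates $g$ by Cauchy--Schwarz, whereas you apply Cauchy--Schwarz slice-wise to obtain the unconditional operator form and then carry out the slicing and energy-allocation argument (the weights $\alpha_z$ and the concentration of $f$ on the best slice $z^*$) directly on the quadratic form. Both routes are sound; yours is more detailed and makes explicit the lower-bound construction and the degenerate cases that the paper leaves implicit.
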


\begin{proof}
Here we briefly explain the idea of the proof. We first note that
\begin{align*}
\rho(X, Y|Z) =  \max  ~& \E[f{(X,Z)}\,g{(Y,Z)}]\\
&\E_{X|Z} [f] = \E_{Y|Z} [g]=0, \\
& \E [f^2]= \E [g^2] =1.
\end{align*}
To verify this, it suffices to expand the expectations $\E[\cdot]$ as $\E_{Z}[\E_{XY|Z} [\cdot]]$, and instead of functions $f(X,Z), g(Y,Z)$ to consider pairs of functions $(f(X, z), g(Y, z))$ for all $z$. 

Now having the above characterization of conditional maximal correlation we can prove the lemma. The point is that if we fix $f(x,z)$, by the Cauchy-Schwarz inequality, the optimal $g(Y,Z)$ will be proportional to $\E_{X|YZ}[f(X,Z)]$.

\end{proof}

From this definition we have $\rho(X, Y|Z)^2\leq s^*(X, Y|Z)$ since $\rho(X,Y|Z=z)^2\leq s^*(X,Y|Z=z)$ for every $z$ with $p(z)>0$. Before stating another connection between $s^*$ and $\rho$, we need the following alternative characterization of $s^*(X,Y|Z)$.

\begin{lemma}We have $$s^*(X,Y|Z)=\sup_{U:~U-XZ-Y,  I(U;Z)=0}\frac{I(U;Y|Z)}{I(U;X|Z)}.$$
In other words, in the definition conditional $s^*$ the supremum with, or without the constraint $I(U;Z)=0$ gives rise to the same value.
\end{lemma}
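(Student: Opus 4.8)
The plan is to prove the two inequalities
$$
s^*(X,Y|Z)\ \geq\ \sup_{U:\,U-XZ-Y,\ I(U;Z)=0}\frac{I(U;Y|Z)}{I(U;X|Z)},\qquad
s^*(X,Y|Z)\ \leq\ \sup_{U:\,U-XZ-Y,\ I(U;Z)=0}\frac{I(U;Y|Z)}{I(U;X|Z)}
$$
separately. The first is immediate: any $U$ feasible for the constrained supremum is in particular feasible for the unconstrained supremum defining $s^*(X,Y|Z)$, so dropping the constraint only enlarges the feasible set and hence cannot decrease the value. So the whole content is the reverse inequality: given an arbitrary $U$ with $U-XZ-Y$ (but possibly $I(U;Z)>0$), I must produce a new auxiliary $U'$ with $U'-XZ-Y$ and $I(U';Z)=0$ whose ratio $I(U';Y|Z)/I(U';X|Z)$ is at least $I(U;Y|Z)/I(U;X|Z)$.

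The key construction I would use is the one already available in this paper: the alternative characterization $s^*(X,Y|Z)=\max_{z:p(z)>0} s^*(X,Y|Z=z)$ (stated in Section~\ref{sec:cond:HC}), together with Lemma~\ref{lemma9} / its $s^*$ analogue, which reduces everything to the \emph{unconditional} quantities $s^*(X,Y|Z=z)$. Concretely: fix the maximizing letter $z^*$, i.e. $s^*(X,Y|Z)=s^*(X,Y|Z=z^*)$, and take a near-optimal auxiliary $V-X-Y$ for the unconditional problem $s^*(X,Y|Z=z^*)$ under the conditional law $p(x,y|z^*)$. Now define $U'$ jointly with $(X,Y,Z)$ by letting $U'=V$ when $Z=z^*$ and $U'=\star$ (a fresh constant symbol) when $Z\neq z^*$. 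Then $U'-XZ-Y$ holds since conditioned on each value of $Z$ we have the required Markov structure; $I(U';Z)=0$ is \emph{not} automatic, so instead I would further randomize: replace the constant branch by an independent copy of $V$'s marginal distribution when $Z\neq z^*$ — or, cleaner, attach to $U'$ an independent copy of $Z$ itself so that $U'$ carries no information about the realized $Z$. The point is that conditioning on $Z=z^*$ leaves $I(U';Y|Z=z^*)=I(V;Y)$ and $I(U';X|Z=z^*)=I(V;X)$ unchanged, while the added independent randomness contributes equally (or nothing) to numerator and denominator after conditioning on $Z$, so the ratio $I(U';Y|Z)/I(U';X|Z)$ is driven to $I(V;Y)/I(V;X)$, which approaches $s^*(X,Y|Z=z^*)=s^*(X,Y|Z)$.

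The main obstacle is making the ``add independent randomness to kill $I(U;Z)$'' step rigorous while controlling the ratio: naively padding $U'$ changes both $I(U';Y|Z)$ and $I(U';X|Z)$ by the same additive amount, which \emph{decreases} a ratio that exceeds $1$, so one has to be careful that the padding contributes to numerator and denominator in exactly the right proportion, or contributes $0$ to both. The right way is to let the padding variable $W$ be independent of everything with $W-XZ-Y$ trivially, set $U'=(U'',W)$ where $U''$ is supported only on $Z=z^*$; then $I(U';Y|Z)=I(U'';Y|Z)$ and $I(U';X|Z)=I(U'';X|Z)$ because $W\perp(X,Y,Z)$, while $I(U';Z)$ can be made $0$ by an appropriate choice of the joint law of $W$ and the indicator $\mathbf 1[Z=z^*]$ — the cleanest realization being to also hand $U'$ a simulated value $\tilde Z$ drawn from $p(z)$ independently of everything, which by symmetry forces $I(U';Z)=0$. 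Once the ratio is shown to equal $I(V;X{:}Y)$-type quantities evaluated under $p(\cdot|z^*)$, optimizing $V$ and invoking $s^*(X,Y|Z)=\max_z s^*(X,Y|Z=z)$ closes the argument. The remaining routine check is that the constrained supremum is also $\leq s^*(X,Y|Z=z)$ for each $z$ (by the same conditioning-on-$U$ trick used to prove Lemma~\ref{lemma9}), giving equality.
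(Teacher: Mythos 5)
Your main line of argument is correct, but it is genuinely different from the paper's proof, and one of your proposed variants is broken. The paper does not pass through the identity $s^*(X,Y|Z)=\max_z s^*(X,Y|Z=z)$ at all. Instead, it takes an \emph{arbitrary} $U$ with $U-XZ-Y$ and applies the functional representation lemma to $p(u|z)$: this yields $U'$ with $I(U';Z)=0$ and $H(U|U'Z)=0$, and one then couples $(U',U,Z,X,Y)$ so that $U'-UZ-XY$ is Markov. The two chain-rule identities $I(U'U;Y|Z)=I(U';Y|Z)=I(U;Y|Z)$ (and likewise for $X$) then show the ratio is preserved \emph{exactly} for every $U$, with no limiting argument and no single-letter decomposition. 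Your route instead reduces to the maximizing letter $z^*$ and builds a near-optimal $U'$ from scratch; this works, and your first fix is the right one: on the branch $Z=z^*$ draw $U'$ from $p(v|x)$ and on every other branch draw $U'$ independently from the marginal $p_V(\cdot)=\sum_x p(x|z^*)p(v|x)$, so that $U'$ has the same conditional law given every $z$ and hence $I(U';Z)=0$, while $I(U';Y|Z)/I(U';X|Z)=I(V;Y)/I(V;X)$ computed under $p(\cdot|z^*)$. What your route buys is an explicit (near-)optimizer with the independence constraint; what it costs is that you must separately justify $s^*(X,Y|Z)=\max_z s^*(X,Y|Z=z)$ (the paper only asserts this elsewhere, though its proof is routine), and you only get the value in the limit rather than ratio preservation for each $U$.

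Two of your fallback suggestions do not work and should be dropped. Attaching to $U'$ an independent copy $\tilde Z$ of $Z$ does not ``force $I(U';Z)=0$'': if the other component $U''$ of $U'$ depends on $Z$ (e.g.\ equals a constant exactly when $Z\neq z^*$), then $I(U';Z)\geq I(U'';Z)>0$ no matter what independent randomness you append. Similarly, the worry in your last paragraph about padding shifting numerator and denominator additively is a symptom of the wrong construction; in the correct construction the auxiliary is \emph{not} a pair $(U'',W)$ with a $Z$-dependent component, but a single variable whose conditional law given $Z=z$ is the same distribution $p_V$ for every $z$, so no padding is needed at all.
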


\begin{proof}
Take some $p(u|x,z)$, so that $U-XZ-Y$ forms a Markov chain. By the functional representation lemma \cite[Appendix B]{elgamal} applied to $p(u|z)$, one can find $p(u,u',z)$ where $U'$ is independent of $Z$ and $H(U|U'Z)=0$. Next, define the joint distribution
$$p(u',u,x,y,z)=p(u',z)p(u|u',z)p(x|u,z)p(y|x,z),$$
whose marginal distribution on $(U, Z, X, Y)$ is the one we started with. 
Observe that we have Markov chains $U'-XZ-Y$ and $U'-UZ-XY$.  Then $I(U;Y|Z)=I(U';Y|Z)$ and $I(U;X|Z)=I(U';X|Z)$. Hence,
$$\frac{I(U;Y|Z)}{I(U;X|Z)}=\frac{I(U';Y|Z)}{I(U';X|Z)},$$
and we have $I(U'; Z)=0$.
\end{proof}

We are now ready to provide an alternative characterization of conditional $\rho$ and $s^*$ in terms of lower convex envelopes. This generalizes such a characterization of~\cite{MC-HC} to the conditional case. 

 Fix $p(z)$ and a channel $p(y|xz)$. Then for $\lambda \in [0,1]$ define the following function of $p(x|z)$:
$$ t_\lambda(p(x|z)) = H(Y|Z) - \lambda H(X|Z).$$

\begin{theorem}The following statements hold:
\begin{enumerate}
\item[{\rm(i)}] $\rho^2(X, Y|Z)$ is the minimum value of $\lambda$ such that the function $t_\lambda$ 
has a positive semidefinite Hessian at $p(x|z)$. 
\item[{\rm(ii)}] $s^*(X,Y|Z)$ is the minimum value of $\lambda$ such that the function $t_\lambda$ touches its lower convex envelope at $p(x|z)$.
\end{enumerate}
\end{theorem}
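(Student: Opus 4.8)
The plan is to reduce the conditional statement to the unconditional one proved in \cite{MC-HC}, exploiting Lemma \ref{lemma9} and its $s^*$-analogue. The key observation is that both sides of each claim ``localize'' over the values of $Z$: the function $t_\lambda(p(x|z)) = H(Y|Z) - \lambda H(X|Z)$ is, by the definition of conditional entropy, a convex combination $\sum_z p(z)\, t_\lambda^{(z)}(p(\cdot|z))$ of the unconditional functions $t_\lambda^{(z)}(q(x)) = H(Y|Z=z) - \lambda H(X|Z=z)$ attached to each conditional channel $p(y|x,z)$, and crucially the variable $p(x|z)$ decomposes as an independent block for each $z$. So first I would fix notation: treat $p(x|z)$, for $z$ ranging over the support of $p(z)$, as a point in the product $\prod_z \Delta(\mathcal X)$ of simplices, and note that $t_\lambda$ is a separable (block-diagonal) function of this point, with block $z$ equal to $p(z)\, t_\lambda^{(z)}$.

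For part (i), the Hessian of a separable function at a point is block-diagonal with blocks the (scaled) Hessians of the summands, hence it is positive semidefinite if and only if each block is. Therefore $t_\lambda$ has PSD Hessian at $(p(x|z))_z$ iff $t_\lambda^{(z)}$ has PSD Hessian at $p(x|z)$ for every $z$ with $p(z)>0$. By the unconditional result (recorded in the discussion around equation \eqref{eqn:sstareho1} and in \cite{MC-HC}), the minimum such $\lambda$ for the $z$-th block is $\rho^2(X,Y|Z=z)$; taking the minimum $\lambda$ that works simultaneously for all $z$ gives $\max_{z: p(z)>0}\rho^2(X,Y|Z=z)$, which is $\rho^2(X,Y|Z)$ by definition. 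For part (ii), the lower convex envelope of a separable sum over a product domain is the sum of the lower convex envelopes of the pieces (minimizing over a product measure, each coordinate's auxiliary split is independent — this is exactly the content of Lemma \ref{lemma9} translated to the envelope picture), so $t_\lambda$ touches its lower convex envelope at $(p(x|z))_z$ iff $t_\lambda^{(z)}$ touches its lower convex envelope at $p(x|z)$ for every $z$. Again invoking the unconditional characterization, the minimal $\lambda$ for block $z$ is $s^*(X,Y|Z=z)$, and the minimal $\lambda$ that works for all $z$ simultaneously is $\max_{z:p(z)>0}s^*(X,Y|Z=z) = s^*(X,Y|Z)$, using the alternative characterization of conditional $s^*$ established just above.

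The step I expect to be the main obstacle is justifying the ``envelope of a separable sum is the sum of the envelopes'' claim with full rigor, since the lower convex envelope is taken with respect to perturbations of $p(x|z)$ that correspond to auxiliary random variables $U$ with $U - XZ - Y$, and one must make sure the relevant class of $U$'s respects the product structure. Here the lemma just proved — that one may assume $I(U;Z)=0$ in the definition of $s^*(X,Y|Z)$ — is what makes the decomposition clean: it lets us take $U$ independent of $Z$, so conditioning on $\{U=u\}$ still gives a well-defined conditional distribution $p(x|z,u)$ in each block, and the representation $\min_{U: U-XZ-Y}[H(Y|UZ) - \lambda H(X|UZ)]$ genuinely splits as $\sum_z p(z)\min_{U_z}[\cdots]$. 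I would also double-check the differentiability bookkeeping in part (i): the perturbation $q_\epsilon(x|z) = p(x|z)(1 + \epsilon f(x,z))$ with $\E[f|Z]=0$ is exactly a block-wise perturbation, and the second-derivative formula from \cite{GMartonPaper}, applied conditionally, yields $\E_Z\big[\Var_{X|Z}[f] - \lambda \,\E_Y\Var_{X|YZ}[f]\big]$ or equivalently the conditional version of the maximal-correlation Hessian condition, so PSD-ness of the whole Hessian is PSD-ness in each $z$. Beyond these two points the argument is a routine transcription of the unconditional proof with every entropy replaced by its $Z$-conditioned counterpart.
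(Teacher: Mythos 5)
Your argument is correct, but it takes a genuinely different route from the paper's. The paper proves both parts directly in the conditional setting: for (i) it computes the second derivative of $t_\lambda$ along an arbitrary perturbation $p_\epsilon(x,z)=p(x,z)(1+\epsilon f(x,z))$ with $\E[f|Z]=0$, obtaining $-\E\big[\E[f(X,Z)|YZ]^2\big]+\lambda\,\E[f^2]$, and matches the resulting threshold against the variational formula $\rho^2(X,Y|Z)=\max_{\E[f|Z]=0,\E[f^2]=1}\E_{YZ}\big[(\E_{X|YZ}[f])^2\big]$ proved in the lemma preceding the theorem; for (ii) it writes the touching condition as $H(Y|Z)-\lambda H(X|Z)\leq H(Y|UZ)-\lambda H(X|UZ)$ for all $U$ with $U-XZ-Y$ and $I(U;Z)=0$, and matches it against the lemma characterizing $s^*(X,Y|Z)$ as a supremum over exactly that class of $U$'s. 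You instead exploit the separable structure of $t_\lambda=\sum_z p(z)\,t_\lambda^{(z)}$ over the product of simplices $\prod_z\Delta(\mathcal X)$, reduce each block to the unconditional theorem of~\cite{MC-HC}, and finish with $\rho^2(X,Y|Z)=\max_z\rho^2(X,Y|Z=z)$ and $s^*(X,Y|Z)=\max_z s^*(X,Y|Z=z)$. This is sound: the block-diagonality argument for the Hessian is immediate, the ``envelope of a separable sum equals the sum of the envelopes'' claim holds on a product domain (and you correctly identify that the $I(U;Z)=0$ reduction is what makes the convex decompositions of the product point correspond to admissible auxiliaries, with independent $U_z$'s recombinable into a single $U$ independent of $Z$), and the monotonicity in $\lambda$ needed to pass from per-block thresholds to a simultaneous threshold is clear since both conditions only tighten as $\lambda$ decreases. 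Your approach buys modularity --- the unconditional result is used as a black box and the conditional statement becomes pure bookkeeping --- whereas the paper's direct computation is self-contained and lines up term-by-term with the two lemmas it has just proved. One small correction: your parenthetical second-derivative formula $\E_Z\big[\Var_{X|Z}[f]-\lambda\,\E_Y\Var_{X|YZ}[f]\big]$ has $\lambda$ attached to the wrong term; the correct expression is $\lambda\,\E[f^2]-\E\big[\E[f|YZ]^2\big]$. Since your main argument rests on block-diagonality rather than on this formula, this is a slip to fix rather than a gap.
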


\begin{proof}
(i) This follows from the following characterization of conditional maximal correlation:
$$\rho(X,Y|Z)=\max_{\E[f_{XZ}|Z]=0, \E[f^2]=1} \E_{YZ}[(\E_{X|YZ}[f(X,Z)])^2].$$
Take an arbitrary perturbation of the form
$p_{\epsilon}(x,z)=p(x,z)(1+\epsilon f(x,z))$ such that $p_{\epsilon}(z)=p(z)$. For $p_{\epsilon}$ to stay a
valid perturbation we need $\E[f]=0$, and for it to satisfy $p_{\epsilon}(z)=p(z)$, we need $\E[f|Z]=0$. Furthermore, we can normalize $f$ by assuming that $\E[f^2]=1$. With these constraints we obtain a conditional distribution $p_{\epsilon}(x|z)$ for sufficiently small $|\epsilon|$. Then we have
\begin{align*}
\frac{\partial^2}{\partial \epsilon^2}t_\lambda(p_{\epsilon}(x|z))\Big|_{\epsilon=0} & =  -\E[\E[f(X,Z)|YZ]^2]+\lambda \E[f^2(X,Z)]\\
& =-\E[\E[f(X,Z)|YZ]^2]+\lambda,
\end{align*}
which is non-negative as long as $\lambda \geq \E[\E[f(X,Z)|YZ]^2]$. Thus  the minimum value $\lambda^*$ such that the second derivative is non-negative for all local perturbations is 
$$\lambda^* =\max_{\E[f_{XZ}|Z]=0, \E[f^2]=1} \E_{YZ}[(\E_{X|YZ}f(X,Z))^2].$$

(ii) Consider the minimum value of $\lambda$, say $\tilde \lambda$, such that the function $t_\lambda$ touches its lower convex envelope at $p(x|z)$. This means that $\tilde \lambda$ is the minimum $\lambda$ such that
$$ H(Y|Z)-\lambda H(X|Z)\leq H(Y|UZ)-\lambda H(X|UZ),\qquad \forall \ U: U-XZ-Y, I(U;Z)=0.$$ 
Note that if $U$ is conditionally independent of $X$, i.e., $I(U;X|Z) = 0$, then the above inequality always holds. So let us further assume that $I(U; X|Z)>0$. Then rewriting the above equation, we find that $\tilde \lambda$ is the minimum $\lambda$ such that,
$$\lambda\geq \frac{I(U;Y|Z)}{I(U;X|Z)}, \qquad  \forall \ U: U-XZ-Y~\mbox{with}~ I(U;Z)=0, I(U;X|Z) > 0.$$
Thus, 
$$\tilde \lambda = \sup_{U:~U-XZ-Y, I(U;Z)=0}\frac{I(U;Y|Z)}{I(U;X|Z)}.$$
\end{proof}


\section{Secure distribution simulation: an application of conditional hypercontractivity ribbon}
\label{appndx:secsim}

Consider two parties and an adversary who observe i.i.d.\ repetitions of $X_1$ and $X_2$ and $Z$ respectively. The goal of the parties is to securely generate a \emph{single copy} of $(Y_1, Y_2)$ with a given distribution $q(y_1, y_2)$ under local stochastic maps. More precisely we say that secure non-interactive simulation of $(Y_1, Y_2)$ from i.i.d.\ repetitions of $(X_1, X_2, Z)$ is possible if for every $\epsilon>0$ there is $n$
 such that the parties can generate \emph{a single copy} of $\hat Y_1$ and $\hat Y_2$ as stochastic functions of $X_1^n$ and $X_2^n$ respectively such that 
\begin{itemize}
\item\emph{Reliability constraint:} $(\hat Y_1,\hat Y_2)$ has a desired joint distribution $q(y_1,y_2)$, i.e., the joint distribution of the simulated random variables $p(\hat y_1,\hat y_2)$ is $\epsilon$-close to $q(y_1, y_2)$:
$$\|p(\hat y_1, \hat y_2)-q(y_1,y_2)\|_1\leq \epsilon.$$
\item\emph{Security:} $(\hat Y_1,\hat Y_2)$ is almost independent of $Z^n$:
$$I(\hat Y_1\hat Y_2;Z^n)\leq \epsilon.$$
\end{itemize}

This following theorem gives a bound on the problem of secure distribution simulation based on conditional hypercontractivity ribbon.

\begin{theorem} 
If secure distribution simulation is possible, then we have
$$\fR(X_1, X_2|Z)\subseteq \fR(Y_1, Y_2).$$
\end{theorem}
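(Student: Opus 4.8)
The plan is to transport the point $(\lambda_1,\lambda_2)$ along the simulation protocol using the properties of the conditional hypercontractivity ribbon $\fR(\cdot,\cdot|Z)$ established above, and to cash in the reliability and security guarantees only at the very end. The argument closely parallels the impossibility arguments earlier in the paper (in particular the two-way channel simulation bound).

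Fix $(\lambda_1,\lambda_2)\in\fR(X_1,X_2|Z)$ and, for a given $\epsilon>0$, a simulation code with blocklength $n$ and local stochastic maps $p(\hat y_1|x_1^n)$, $p(\hat y_2|x_2^n)$ of error at most $\epsilon$. By the tensorization property, $(\lambda_1,\lambda_2)\in\fR(X_1^n,X_2^n|Z^n)$. Since $p(\hat y_1|x_1^n)$ and $p(\hat y_2|x_2^n)$ are in particular conditional maps $p(\hat y_1|x_1^n z^n)$ and $p(\hat y_2|x_2^n z^n)$, the data processing property of the conditional ribbon yields $(\lambda_1,\lambda_2)\in\fR(\hat Y_1,\hat Y_2|Z^n)$. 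Now let $p(v|y_1,y_2)$ be an arbitrary channel; since membership in $\fR(Y_1,Y_2)$ only needs to be checked against auxiliaries of bounded cardinality (support lemma), I may assume its output alphabet is finite. Applying this channel to $(\hat Y_1,\hat Y_2)$ produces $\hat V$, which is a stochastic function of $(\hat Y_1,\hat Y_2)$ alone and hence an admissible auxiliary in $\fR(\hat Y_1,\hat Y_2|Z^n)$, so
$$\lambda_1 I(\hat Y_1;\hat V\mid Z^n)+\lambda_2 I(\hat Y_2;\hat V\mid Z^n)\le I(\hat Y_1\hat Y_2;\hat V\mid Z^n).$$

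Next I would strip off the conditioning on $Z^n$. Using the identity $I(\hat Y_1;\hat V\mid Z^n)=I(\hat Y_1;\hat V)+I(\hat Y_1;Z^n\mid\hat V)-I(\hat Y_1;Z^n)$, the security bound $I(\hat Y_1\hat Y_2;Z^n)\le\epsilon$, and data processing for mutual information (so that $I(\hat Y_1;Z^n)$ and $I(\hat Y_1\hat V;Z^n)$ are both at most $\epsilon$, as $\hat Y_1$ and $(\hat Y_1,\hat V)$ are functions of $(\hat Y_1,\hat Y_2)$), one gets $|I(\hat Y_1;\hat V\mid Z^n)-I(\hat Y_1;\hat V)|\le\epsilon$, and similarly for $\hat Y_2$ and for $\hat Y_1\hat Y_2$. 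Feeding these into the displayed inequality and keeping track of the direction of each correction,
$$\lambda_1 I(\hat Y_1;\hat V)+\lambda_2 I(\hat Y_2;\hat V)\le I(\hat Y_1\hat Y_2;\hat V)+(1+\lambda_1+\lambda_2)\,\epsilon.$$
By the reliability bound $\|p(\hat y_1,\hat y_2)-q(y_1,y_2)\|_1\le\epsilon$ and the fact that post-processing by $p(v|y_1,y_2)$ cannot increase total variation, the law of $(\hat Y_1,\hat Y_2,\hat V)$ is $\epsilon$-close to that of $(Y_1,Y_2,V)$, where $(Y_1,Y_2)\sim q$ and $V$ is obtained through $p(v|y_1,y_2)$; on the fixed finite alphabet, continuity of mutual information lets me replace each of $I(\hat Y_1;\hat V)$, $I(\hat Y_2;\hat V)$, $I(\hat Y_1\hat Y_2;\hat V)$ by $I(Y_1;V)$, $I(Y_2;V)$, $I(Y_1Y_2;V)$ up to an additive error $\delta(\epsilon)$ with $\delta(\epsilon)\to 0$. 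Letting $\epsilon\to 0$ (legitimate because the simulation is assumed possible for every $\epsilon$) gives $\lambda_1 I(Y_1;V)+\lambda_2 I(Y_2;V)\le I(Y_1Y_2;V)$ for every $p(v|y_1,y_2)$, i.e.\ $(\lambda_1,\lambda_2)\in\fR(Y_1,Y_2)$.

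The mutual-information bookkeeping and the continuity/cardinality arguments are routine; the only genuinely load-bearing step is the use of the security constraint to pass from the conditional quantities $I(\hat Y_i;\hat V\mid Z^n)$ to the unconditional $I(\hat Y_i;\hat V)$. Without security there is no reason these should be close, since $\hat Y_i$ is computed from $X_i^n$, which is correlated with $Z^n$; so I expect the main obstacle to be verifying that this ``conditional $\approx$ unconditional'' estimate is quantitative and uniform enough to survive the $\epsilon\to0$ limit. Everything else follows by a direct appeal to the tensorization and data processing properties of $\fR(\cdot,\cdot\mid Z)$ proved above.
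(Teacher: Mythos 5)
Your proof is correct, but the way you remove the conditioning on $Z^n$ is genuinely different from the paper's. The paper invokes Lemma~\ref{lemma9} to write $\fR(\hat Y_1,\hat Y_2|Z^n)$ as the intersection $\bigcap_{z^n:p(z^n)>0}\fR(\hat Y_1,\hat Y_2|Z^n=z^n)$, then applies Pinsker's inequality to the security constraint $I(\hat Y_1\hat Y_2;Z^n)\le\epsilon$ to exhibit a single realization $z_0^n$ for which $p(\hat y_1\hat y_2|z_0^n)$ is $O(\sqrt{\epsilon})$-close to $p(\hat y_1\hat y_2)$, hence to $q$, and finishes by continuity of mutual information applied to the ribbon inequality conditioned on that one $z_0^n$. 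You instead keep the averaged conditional mutual informations and observe that, because $Z^n-(\hat Y_1,\hat Y_2)-\hat V$ is a Markov chain, the security bound forces $|I(\hat Y_i;\hat V\mid Z^n)-I(\hat Y_i;\hat V)|\le\epsilon$ (and likewise for the joint term), yielding the unconditional inequality with an explicit $(1+\lambda_1+\lambda_2)\epsilon$ slack before the final continuity step. Both routes are valid; yours bypasses Lemma~\ref{lemma9} and Pinsker entirely and gives an error linear in $\epsilon$ rather than one routed through a $\sqrt{\epsilon}$ total-variation bound, while the paper's argument stays closer to its set-theoretic treatment of the conditional ribbon. The one point you should make explicit is that $\hat V$ is generated from $(\hat Y_1,\hat Y_2)$ using randomness independent of $Z^n$, so that $I(\hat Y_1\hat Y_2\hat V;Z^n)=I(\hat Y_1\hat Y_2;Z^n)\le\epsilon$; this Markov structure is exactly what makes your ``conditional $\approx$ unconditional'' estimate go through.
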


\begin{proof} 
We have
\begin{align}
\fR(X_1,X_2|Z)&=\fR(X^n_1, X^n_2|Z^n)\nonumber
\\&\subseteq\fR(\hat Y_1, \hat Y_2|Z^n)\nonumber
\\&=\bigcap_{z^n:p(z^n)>0}\fR(\hat Y_1, \hat Y_2|z^n).\label{apndx:eqad3}
\end{align}
where the first equation follows from the tensorization of conditional hypercontractivity ribbon and the second equation follows from its data processing property. 

Observe that
$$I(\hat Y_1\hat Y_2; Z^n)=\sum_{z^n}p(z^n)D\big(p(\hat y_1\hat y_2|z^n)\|p(\hat y_1\hat y_2)\big)\geq 2\sum_{z^n}p(z^n)\|p(\hat y_1\hat y_2|z^n)-p(\hat y_1\hat y_2)\|^2.$$
where we use Pinsker's inequality.  Assuming that the left hand side is at most $\epsilon$, there is some $z_0^n$ such that $p(z_0^n)>0$ and 
$$\|p(\hat y_1\hat y_2|z_0^n)-p(\hat y_1\hat y_2)\|\leq \sqrt{\frac{\epsilon}{2}}.$$ 
Now using~\eqref{apndx:eqad3}, we have
$\fR(X_1,X_2|Z)\subseteq\fR(\hat Y_1,\hat Y_2|z_0^n)$. This means that, if $(\lambda_1, \lambda_2)\in\fR(X_1,X_2|Z)$, then $(\lambda_1, \lambda_2)\in\fR(\hat Y_1,\hat Y_2|z_0^n)$, i.e., for any arbitrary $p(u|\hat y_1\hat y_2)$:
\begin{align}\label{eq:hat-w-y}
\lambda_1 I(U;\hat Y_1| Z^n=z_0^n)+\lambda_2I(U;\hat Y_2| Z^n=z_0^n)\leq I(U;\hat Y_1\hat Y_2| Z^n=z_0^n).
\end{align}
On the other hand by triangle inequality
$\|p(\hat y_1\hat y_2|z_0^n)-q(y_1y_2)\|_1\leq \epsilon+\sqrt{{\epsilon}/{2}}$. Then we may use the Fannes inequality to approximate each term of~\eqref{eq:hat-w-y} by an unconditional mutual information. Indeed, as $\epsilon\rightarrow 0$ we obtain
$$\lambda_1 I(U;Y_1)+\lambda_2I(U;Y_2)\leq  I(U;Y_1Y_2).$$
Thus, $(\lambda_1, \lambda_2)\in\fR(Y_1, Y_2)$.
\end{proof}


\section{Additivity and data processing of $\widetilde{G}^s_{X_{[k+1]}}(\lambda_{[k]})$}
\label{appdx:localpertub}

Our goal in this appendix is to prove the additivity and data processing properties of $\widetilde{G}^s_{X_{[k+1]}}(\lambda_{[k]})$ defined in~\eqref{defsecG2}. For this we first give an algebraic proof of these properties for ${G}^s_{X_{[k+1]}}(\lambda_{[k]})$ defined in~\eqref{defsecG} and then using the recipe of Table~\ref{table:similarities} we convert it to a proof for $\widetilde{G}^s_{X_{[k+1]}}(\lambda_{[k]})$.   

\subsection{Additivity}
We start by showing that $G^s_{X_{[k+1]}}(\lambda_{[k]})$ is additive. That is, if $X_{[k+1]}$ and $Y_{[k+1]}$ are independent (but not necessarily identically distributed), then $G^s_{X_{[k+1]}Y_{[k+1]}}(\lambda_{[k]})=G^s_{X_{[k+1]}}(\lambda_{[k]})+G^s_{Y_{[k+1]}}(\lambda_{[k]})$. From the definition 
\begin{align}
G^s_{X_{[k+1]}Y_{[k+1]}}(\lambda_{[k]})=\max_{U-X_{k+1}Y_{k+1}-X_{[k]}Y_{[k]}}-I(X_{k+1}Y_{k+1};U)+\sum_{i=1}^{k}\lambda_i I(X_iY_i;U),
\end{align}
it is clear that $G^s_{X_{[{k+1}]}Y_{[{k+1}]}}(\lambda_{[k]})\geq G^s_{X_{[{k+1}]}}(\lambda_{[k]})+G^s_{Y_{[{k+1}]}}(\lambda_{[k]})$ since we can take $U$ to consist of an independent pair $(U_1, U_2)$ with $U_1-X_{{k+1}}- X_{[k]}$ and $U_2-Y_{{k+1}}-Y_{[k]}$. 

To show the other direction, note that
\begin{align}
-I(X_{k+1}Y_{k+1};U)+\sum_{i=1}^{k}&\lambda_i I(X_iY_i;U)\\
=&
-I(X_{k+1};U)+\sum_{i=1}^{k}\lambda_i I(X_i;U)\nonumber
\\&-I(Y_{k+1};U|X_{k+1})+\sum_{i=1}^{k}\lambda_i I(Y_i;U|X_i)\nonumber
\\&\leq
G^s_{X_{[{k+1}]}}(\lambda_{[k]})-I(Y_{k+1};U|X_{k+1})+\sum_{i=1}^{k}\lambda_i I(Y_i;UX_{k+1}|X_i)\nonumber
\\&=
G^s_{X_{[{k+1}]}}(\lambda_{[k]})-I(Y_{k+1};UX_{k+1})+\sum_{i=1}^{k}\lambda_i I(Y_i;UX_{k+1}X_i)\label{appc36}
\\&=
G^s_{X_{[{k+1}]}}(\lambda_{[k]})-I(Y_{k+1};UX_{k+1})+\sum_{i=1}^{k}\lambda_i I(Y_i;UX_{k+1})\label{appc37}
\\&\leq  G^s_{X_{[{k+1}]}}(\lambda_{[k]})+G^s_{Y_{[{k+1}]}}(\lambda_{[k]}),\label{appc38}
\end{align}
where in \eqref{appc36} we used the fact that $X_{[{k+1}]}$ and $Y_{[{k+1}]}$ are independent; in~\eqref{appc37} we used $I(Y_i;X_i|UX_{k+1})=0$ which holds because
\begin{align*}
I(Y_i;X_i|UX_{k+1})&\leq I(Y_iY_{k+1};X_i|UX_{k+1})
\\&=I(Y_i;X_i|UX_{k+1}Y_{k+1})+I(Y_{k+1};X_i|UX_{k+1})
\\&\leq 0+I(UY_{k+1};X_i|X_{k+1})
\\&=I(Y_{k+1};X_i|X_{k+1})+I(U;X_i|X_{k+1}Y_{k+1})
\\&=0,\end{align*}
and finally in~\eqref{appc38} we used the Markov chain condition $UX_{k+1}-Y_{k+1}-Y_{[k]}$.

To show that $\widetilde{G}^s_{X_{[{k+1}]}}(\lambda_{[k]})$ is additive, we follow similar steps. We need to show that if $X_{[{k+1}]}$ and $Y_{[{k+1}]}$ are independent  (but not necessarily identically distributed), then 
$$\widetilde{G}^s_{X_{[{k+1}]}Y_{[{k+1}]}}(\lambda_{[k]})=\widetilde{G}^s_{X_{[{k+1}]}}(\lambda_{[k]})+\widetilde{G}^s_{Y_{[{k+1}]}}(\lambda_{[k]}).$$
From the definition 
\begin{align}
\widetilde{G}^s_{X_{[{k+1}]}Y_{[{k+1}]}}(\lambda_{[k]})=\max_{f(X_{k+1}Y_{k+1})}\bigg[-{\Var}[f(X_{k+1}Y_{k+1})]+\sum_{i=1}^{k}\lambda_i{\Var}_{X_iY_i}[\mathbb{E}_{X_{k+1}Y_{k+1}|X_iY_i}[f(X_{k+1}Y_{k+1})]]\bigg],
\end{align}
it is clear that $\widetilde{G}^s_{X_{[{k+1}]}Y_{[{k+1}]}}(\lambda_{[k]})\geq \widetilde{G}^s_{X_{[{k+1}]}}(\lambda_{[k]})+\widetilde{G}^s_{Y_{[{k+1}]}}(\lambda_{[k]})$ since we can take $f(X_k,Y_k)$ to consist of a pair $(f(X_{k+1}), f(Y_{k+1}))$.

To show the other direction, note that
\begin{align}
&-\Var[f(X_{k+1}Y_{k+1})]+\sum_{i=1}^{k}\lambda_i\Var_{X_i}[\mathbb{E}_{X_{k+1}Y_{k+1}|X_iY_i}[f(X_{k+1}Y_{k+1})]\nonumber
\\&=
-\Var_{X_{k+1}}\mathbb{E}_{Y_{k+1}|X_{k+1}}[f(X_{k+1}Y_{k+1})]+\sum_{i=1}^{k}\lambda_i\Var_{X_i}[\mathbb{E}_{X_{k+1}Y_{k+1}Y_i|X_i}[f(X_{k+1}Y_{k+1})]\nonumber
\\&\qquad -\mathbb{E}_{X_{k+1}}\Var_{Y_{k+1}|X_{k+1}}[f(X_{k+1}Y_{k+1})]+\sum_{i=1}^{k}\lambda_i\mathbb{E}_{X_i}\Var_{Y_i|X_i}[\mathbb{E}_{X_{k+1}Y_{k+1}|X_iY_i}[f(X_{k+1}Y_{k+1})]\nonumber
\\&\leq
\widetilde{G}^s_{X_{[{k+1}]}}(\lambda_{[k]})-\mathbb{E}_{X_{k+1}}\Var_{Y_{k+1}|X_{k+1}}[f(X_{k+1}Y_{k+1})]\nonumber\\
& \qquad +\sum_{i=1}^{k}\lambda_i\mathbb{E}_{X_i}\Var_{Y_i|X_i}[\mathbb{E}_{X_{k+1}Y_{k+1}|X_iY_i}[f(X_{k+1}Y_{k+1})]\nonumber
\\&\leq
\widetilde{G}^s_{X_{[{k+1}]}}(\lambda_{[k]})-\mathbb{E}_{X_{k+1}}\Var_{Y_{k+1}|X_{k+1}}[f(X_{k+1}Y_{k+1})]\nonumber\\
& \qquad +\sum_{i=1}^{k}\lambda_i\mathbb{E}_{X_{k+1}X_i}\Var_{Y_i|X_iX_{k+1}}[\mathbb{E}_{Y_{k+1}|X_iY_iX_{k+1}}[f(X_{k+1}Y_{k+1})]\label{eqn:towrite1}
\\&
=
\widetilde{G}^s_{X_{[{k+1}]}}(\lambda_{[k]})-\mathbb{E}_{X_{k+1}}\Var_{Y_{k+1}|X_{k+1}}[f(X_{k+1}Y_{k+1})]\nonumber\\
& \qquad +\sum_{i=1}^{k}\lambda_i\mathbb{E}_{X_{k+1}}\Var_{Y_i|X_{k+1}}[\mathbb{E}_{Y_{k+1}|Y_iX_{k+1}}[f(X_{k+1}Y_{k+1})]\label{eqn:towrite2}
\\&\leq  \widetilde{G}^s_{X_{[{k+1}]}}(\lambda_{[k]})+\widetilde G^s_{Y_{[{k+1}]}}(\lambda_{[k]}).\nonumber
\end{align}
Here equation~\eqref{eqn:towrite1} holds because of property~4 of Table~\ref{table:similarities} for the choice of
$A=(X_{k+1}, Y_{k+1})$, $C=Y_i$, $D=X_i$, $E=X_{k+1}$. The Markov chain condition that we need to verify is 
$Y_i-X_i-X_{k+1}$, which holds because $X_{[{k+1}]}$ is independent of $Y_{[{k+1}]}$; equation~\eqref{eqn:towrite2} holds because $\mathbb{E}_{Y_{k+1}|X_iY_iX_{k+1}}[f(X_{k+1}Y_{k+1})]$ is equal to  $\mathbb{E}_{Y_{k+1}|Y_iX_{k+1}}[f(X_{k+1}Y_{k+1})]$ for $(X_{k+1}, X_i)$ is independent of $(Y_{k+1}, Y_i)$.

\subsection{Data processing}
We need to show that $\widetilde G^s_{Y_{[{k+1}]}}(\lambda_{[k]})\leq \widetilde G^s_{X_{[{k+1}]}}(\lambda_{[k]})$ for every
$p(y_i|x_i)$. As before, we prove this in two stages: 

\vspace{.15in}
\noindent
\textbf{Part I ($Y_i$ is a function of $X_i$):} Let us start with an algebraic proof of data processing for $G^s_{X_{[{k+1}]}}(\lambda_{[k]})$. Take some arbitrary $p(u|y_{{k+1}})$. Define 
$$p(u, x_{[{k+1}]}, y_{[{k+1}]})=p(x_{[{k+1}]}, y_{[{k+1}]})p(u|y_{k+1}).$$
Then we have $I(Y_{k+1};U)=I(X_{k+1};U)$ and $I(Y_i;U)\leq I(X_i;U)$. Therefore,
\begin{align}
-I(Y_{k+1};U)+\sum_{i=1}^{k}\lambda_i I(Y_i;U)&\leq -I(X_{k+1};U)+\sum_{i=1}^{k}\lambda_i I(X_i;U)
\\&\leq G^s_{X_{[{k+1}]}}(\lambda_{[k]}).
\end{align}
Since this holds for any arbitrary $p(u|y_{k+1})$, we get the desired result.

The proof for $\widetilde{G}^s_{X_{[{k+1}]}}(\lambda_{[k]})$ is similar. Take some function $f(Y_{k+1})$. Then, $f(Y_{k+1})$ can be also thought of as a function of $X_{k+1}$ since $Y_{k+1}$ itself is a function of $X_{k+1}$. Next, we have
 $$\Var_{Y_i}[\mathbb{E}_{X_{k+1}|Y_i}[f(Y_{k+1})]]\leq \Var_{X_iY_i}[\mathbb{E}_{X_{k+1}|X_iY_i}[f(Y_{k+1})]]=\Var_{X_i}[\mathbb{E}_{X_{k+1}|X_i}[f(Y_{k+1})]],$$
where the inequality follows from the law of total variance (property~3 of Table~\ref{table:similarities}).
Then, we have
\begin{align*}
-\Var[f(Y_{k+1})]+\sum_{i=1}^{k}\lambda_i\Var_{Y_i}&[\mathbb{E}_{Y_{k+1}|Y_i}[f(Y_{k+1})]]\\
&\leq -\Var[f(Y_{k+1})]+\sum_{i=1}^{k}\lambda_i\Var_{X_i}[\mathbb{E}_{X_{k+1}|X_i}[f(Y_{k+1})]]
\\&\leq \widetilde G^s_{X_{[{k+1}]}}(\lambda_{[k]}).
\end{align*}
Since this holds for any arbitrary function $f(Y_{k+1})$, we get the desired result.

\vspace{.15in}
\noindent
\textbf{Part II ($Y_i=(X_i, A_i)$ where $A_i$'s are mutually independent of each other, and of $Y_{[{k+1}]}$):} 
We would like to show that 
$$\widetilde{G}^s_{X_{[{k+1}]}}(\lambda_{[k]})= \widetilde{G}^s_{A_{[{k+1}]}X_{[{k+1}]}}(\lambda_{[k]}).$$ 
From the additivity of $\widetilde{G}^s$ for product of independent distributions we have
$\widetilde{G}^s_{A_{[{k+1}]}X_{[{k+1}]}}(\lambda_{[k]})=\widetilde{G}^s_{X_{[{k+1}]}}(\lambda_{[k]})+\widetilde{G}^s_{A_{[{k+1}]}}(\lambda_{[k]})$. Therefore, we need to show that
$$\widetilde{G}^s_{A_{[{k+1}]}}(\lambda_{[k]})=0,$$
when $A_i$'s are mutually independent.

As before let us begin with the proof of  $G^s_{A_{[{k+1}]}}(\lambda_{[k]})=0$. We need to show that for any arbitrary $p(u|a_{k+1})$ we have
\begin{align}
-I(A_{k+1};U)+\sum_{i=1}^{k}\lambda_i I(A_i;U)\leq 0.
\end{align}
This inequality holds because $I(A_i;U)=0$ for $i\in [k]$.  

Now, to show that $\widetilde{G}^s_{A_{[{k+1}]}}(\lambda_{[k]})=0$, we need to show that for any function $f(A_{k+1})$ we have
\begin{align*}
-\Var[f(X_{k+1})]+\sum_{i=1}^{k}\lambda_i\Var_{A_i}[\mathbb{E}_{A_{k+1}|A_i}[f(A_{k+1})]]\leq 0.
\end{align*}
From the independence of $A_i$ and $A_{k+1}$ we have that $\mathbb{E}_{A_{k+1}|A_i}[f(A_{k+1})]=0$. Hence, the above equation holds.



\end{document}